\newif\iffull
\newtheorem{observation}{Observation}
\newcommand{\propertwo}{fan-planar proper $2$-layer\xspace}
\newcommand{\calC}{\mathcal{C}}
\newcommand{\calP}{\mathcal{P}}
\newcommand{\stego}{stegosaurus\xspace}
\newcommand{\DP}{{\sc PlanarDP}\xspace}
\newcommand{\Dujmovic}{Dujmovi\'{c}\xspace}
\newcommand{\lay}[1]{${#1}$-layer}
\newcommand{\ld}[1]{\lay{#1} drawing\xspace}
\newcommand{\lds}[1]{\lay{#1} drawings\xspace}
\begin{document}

\title{On 
Layered Fan-Planar Graph Drawings\thanks{Research of TB supported by NSERC. 
Research of SC supported by DFG grant WO 758/11-1.
Research of  FM supported in part by MIUR under Grant 20174LF3T8 AHeAD: efficient Algorithms for HArnessing networked Data.   
Research initiated while FM was visiting the University of Waterloo and continued at the Bertinoro Workshop on Graph Drawing 2018.}}
    
\author{
Therese Biedl\inst{1}$^{\mbox{\Letter}}$ \and
Steven Chaplick\inst{2} \and
Ji\v{r}i Fiala \inst{3} \and
Michael Kaufmann \inst{4} \and
Fabrizio~Montecchiani\inst{5} \and
Martin N\"{o}llenburg\inst{6} \and
Chrysanthi Raftopoulou \inst{7}
}
\authorrunning{Biedl, Chaplick, Fiala, Kaufmann, Montecchiani, N\"{ollenburg}, Raftopolou}
\institute{
University of Waterloo, Canada, \email{biedl@uwaterloo.ca}
\and
Universit\"{a}t W\"{u}rzburg, Germany 
\and
Charles University, Prague, Czech Republic
\and
Universit\"{a}t T\"{u}bingen, Germany
\and
Universit{\`a} degli Studi di Perugia, Italy
\and
Algorithms and Complexity Group, TU Wien, Vienna, Austria 
\and
National Technical University of Athens, Greece 
}

\maketitle

\begin{abstract}
In this paper, we 
study fan-planar drawings that use $h$ layers and are proper, i.e.,
edges connect adjacent layers.
We show that if the embedding of
the graph is fixed, then testing the existence of such drawings
is fixed-parameter tractable in $h$, via a reduction to a similar result for planar
graphs by \Dujmovic et al.  If the embedding is not fixed, then we give partial
results for $h=2$: It was already known how to test existence of fan-planar proper \lds{2} for 2-connected graphs, 
and we show here how to test this for trees.  Along the
way, we exhibit other interesting results for graphs with a fan-planar proper \ld{h};
in particular we bound their pathwidth and show
that they have a bar-1-visibility representation.
\end{abstract}

\section{Introduction}

In a seminal paper, \Dujmovic, Fellows, Kitching, Liotta, McCartin,
Nishimura, Ragde, Rosamond, Whitesides and Wood showed that testing
whether a planar graph has a proper layered drawing of height $h$
is fixed-parameter tractable in $h$ \cite{DFK+08}.  (Detailed definitions
are in the next section.)  This is of interest
since finding a proper layered drawing of minimum height is NP-hard
\cite{HR92}.  \Dujmovic et al.~also study
some variations, such as having a constant number of crossings
or permitting flat edges and long edges.

In this paper, we aim to generalize their results to graphs that are
{\em near-planar}, i.e., graphs that may have crossings, but
there are restrictions on how such crossings may occur.  Such
graphs have been the object of great interest in the graph drawing
community in recent years (refer to~\cite{DBLP:journals/csur/DidimoLM19,DBLP:journals/csr/KobourovLM17} for surveys).  We study {\em 1-planar graphs}
\iffull
 where every edge has at most one crossing, 
\fi
 and {\em fan-planar graphs}%
\iffull
 where an edge $e$ may have many crossings, but all the edges crossed
by $e$ must have a common endpoint.
Every 1-planar graph is also fan-planar.

\else
.
\fi
Our main result is that for a fan-planar graph $G$ with a fixed embedding,
we can test in time fixed-parameter tractable in $h$ whether $G$ has a proper layered drawing on $h$ layers that respects the embedding.  
Our approach is to 
\iffull
reduce the problem
to the existence of a proper planar $f(h)$-layer drawing for some
suitable function $f(h)\in \Theta(h)$, i.e., we 
\fi
modify $G$ to obtain
a planar graph $G'$ that has a planar \ld{f(h)} if and only if $G$
has a fan-plane \ld{h}.  We then appeal to the result by \Dujmovic et al.
Nearly the same approach also works for short drawings where flat edges
are allowed, and for 1-planar graphs it also works for long edges when
drawn as $y$-monotone polylines.
\iffull
(In contrast to planar drawings, such 1-planar drawings cannot
always be ``straightened'' into a straight-line drawing.)
\fi

The above algorithms crucially rely on the given embedding. 
\iffull
We also study the case where the embedding can be chosen.  
Here it was
known how to test whether the graph has a proper drawing on 2 layers
if the graph is 2-connected~\cite{DBLP:journals/jgaa/BinucciCDGKKMT17}, with the main
ingredient that the structure of such graphs can be characterized.
To push this towards an algorithm for all graphs, we study the following
problem:  Given a tree $T$, does it have a fan-planar proper drawing
on 2 layers?  We give a dynamic programming (DP) algorithm that answers
this question in linear time.   The algorithm is not at all the
usual straightforward bottom-up-approach; instead we need to analyze
the structure of a tree with a fan-planar proper \ld{2} carefully.  
\else
For fan-planar graphs where the embedding can be chosen, the problem
appears much harder; the only result we know of is to test the existence
of \propertwo drawings for 2-connected fan-planar graphs 
\cite{DBLP:journals/jgaa/BinucciCDGKKMT17}.  Based on their insights,
we give here a (surprisingly complicated) algorithm to solve the problems
for trees.
\fi

One crucial ingredient for the algorithm by \Dujmovic et al.~\cite{DFK+08}
is that a graph with a planar proper \ld{h} has pathwidth at most $h-1$,
and this bound is tight.
We similarly can bound the pathwidth for graphs that have a fan-planar proper \ld{h},
and again the bound is tight.
The proof uses a detour: we show that 
graphs with a fan-planar proper layered drawing have a {\em bar-$1$-visibility
representation}, a result of interest in its own right.

The paper is organized as follows.  After reviewing definitions,
we start with the result about bar-1-visibility representations and the 
pathwidth, since these are convenient warm-ups for dealing with 
fan-planar proper layered drawings.  We then give the reduction from
fan-plane proper \ld{h} to planar proper \ld{f(h)} and hence prove
fixed-parameter tractability of the existence of fan-plane proper \ld{h}.
Finally we turn towards fan-planar proper 2-layer drawings, and show how
to test the existence of such drawings for trees in linear time.  
All our algorithms are constructive, i.e., give such drawings in case
of a positive answer.
We conclude with open problems.

\section{Preliminaries}\label{se:preliminaries}

We assume familiarity with graphs and graph terminology. Let $G=(V,E)$
be a graph.  We assume throughout that
$G$ is connected and simple.

A \emph{path decomposition} $P$ of a graph $G$ is a sequence 
$P_1,\dots,P_p$ of vertex sets (``\emph{bags}'') that satisfies:
(1) every vertex is in at least one bag, (2) for every edge $(v,w)$
at least one bag contains both $v$ and $w$, and (3) for every vertex
$v$ the bags containing $v$ are contiguous in the sequence.
The \emph{width} of a path decomposition is $\max\{|P_t|-1: 1 \leq t \leq p\}$. The \emph{pathwidth} $pw(G)$ of a graph $G$ is the minimum width of any path decomposition of $G$.

\newcommand{\fanplanarfootnote}{\footnote{There are further restrictions, see e.g.~\cite{KaufmannU14}. These are automatically
satisfied if the graph has a proper layered drawing and so will not be reviewed here.}  }

\paragraph{\bf Embeddings and drawings that respect them:}
\iffull
We mostly follow the notations in \cite{Schaefer18}.
Let $\Gamma$ be a {\em drawing} of $G$, i.e.,
an assignment of distinct points to vertices and non-self-intersecting curves
connecting the endpoints to each edge.  
All drawings are assumed to be {\em good}:  
No edge-curve intersects a vertex-point unless it is its endpoint, 
no three edge-curves intersect in one point,
any two edge-curves intersect each other in at most one point (including
a shared endpoint),
and any two edge-curves that intersect do so while crossing transversally
(and we call this point a {\em crossing}).
An {\em edge-segment} is a maximal (open) subset of an edge-curve that contains 
no crossing or vertex-point.
In what follows, we usually identify the graph-theoretic object
(vertex, edge) with the geometric object (point, curve) that represents it.

The {\em rotation} at a vertex $v$ in the drawing is the cyclic order in which
the incident edges end at $v$.
(Often we list the neighbours rather than the edges.)
The {\em rotation system} of a drawing consists of the set of 
rotations at all vertices.
A {\em region} of a drawing $\Gamma$ is a maximal connected part of 
$\mathbb{R}^2\setminus \Gamma$; it can be identified by listing the
edge-segments, crossings and vertices on it in clockwise order.  
The {\em planarization} of a drawing is obtained by replacing every crossing
by a new vertex of degree 4 (called a {\em (crossing)-dummy-vertex}).

A graph is called {\em $k$-planar} (or simply {\em planar} for $k{=}0$)
if it has a {\em $k$-planar drawing} where every edge has at most $k$ crossings.
In a planar drawing the regions are called
{\em faces} and the infinite region is called the {\em outer-face}.
A drawing of $G$ is called {\em fan-planar}  if it has a {\em fan-planar drawing}
where for any edge $e$, all edges $e_1,\dots,e_d$ that are crossed by
$e$ have a common endpoint $v$.%
\fanplanarfootnote
The set $\{e_1,\dots,e_d\}$ is also called a {\em fan} with {\em center-vertex} $v$.

A {\em planar embedding} of a graph $G$ 
consists of the rotation system obtained from some planar drawing of $G$
as well as a specification of outer-face.
An {\em (abstract) embedding} of a graph $G$ consists of a graph $G_P$ with
a planar embedding that is the planarization of some drawing of $G$.
Put differently, an embedding of $G$ specifies the rotation system,
the pairs of edges that cross, the order in which the crossings occur
along each edge, and the infinite region.   A drawing of $G$ is called
{\em embedding-preserving} if its planarization is $G_P$.  We use
{\em plane/1-plane/fan-plane} for a graph $G$ together with an abstract embedding
corresponding to a planar/1-planar/fan-planar drawing, and also for an embedding-preserving 
drawing of $G$.
\else
A {\em drawing} of a graph $G$ consists of assigning points (in $\mathbb{R}^2$)
to vertices and curves to edges in a {\em good} fashion (see \cite{Schaefer18} for detailed definitions, but roughly speaking, the graph can be read from the drawing and no crossing of edge-curves can be removed in an obvious way).
The {\em rotation} at a vertex $v$ is the cyclic order in which
the incident edges end at $v$.
A {\em region} of a drawing $\Gamma$ is a maximal connected part of 
$\mathbb{R}^2\setminus \Gamma$.
The {\em planarization} of a drawing is obtained by replacing every crossing
by a new vertex of degree 4 (called a {\em (crossing)-dummy-vertex}).

A graph is called {\em $k$-planar} (or simply {\em planar} for $k{=}0$)
if it has a {\em $k$-planar drawing} where every edge has at most $k$ crossings.
In a planar drawing the regions are called
{\em faces} and the infinite region is called the {\em outer-face}.
A drawing of $G$ is called {\em fan-planar}  if it has a {\em fan-planar drawing}
where for any edge $e$, all edges $e_1,\dots,e_d$ that are crossed by
$e$ have a common endpoint $v$.%
\fanplanarfootnote
The set $\{e_1,\dots,e_d\}$ is also called a {\em fan} with {\em center-vertex} $v$.

A {\em planar embedding} of a graph $G$ 
consists of the set of rotations obtained from some planar drawing of $G$
as well as a specification of outer-face.
An {\em (abstract) embedding} of a graph $G$ consists of a graph $G_P$ with
a planar embedding that is the planarization of a drawing of $G$.
A drawing of $G$ is called
{\em embedding-preserving} if its planarization is $G_P$.  We use
{\em plane/1-plane/fan-plane} for a graph~$G$ 
together
 with an abstract embedding
corresponding to a planar/1-planar/fan-planar drawing, and also for an embedding-preserving 
drawing of $G$.
\fi

\paragraph{\bf Layered drawings:}

Let $h\geq 1$ be an integer. 
An \emph{\ld{h}} of a graph $G$ is
a drawing where the vertices are on one of $h$ distinct horizontal lines $L_1, \dots, L_h$, called \emph{layers}, and edges are drawn as $y$-monotone polylines
for which all bends lie on layers.  We enumerate the layers top-to-bottom.

\begin{figure}[tb]
\hspace*{\fill}%
\includegraphics[width=0.49\linewidth,page=2]{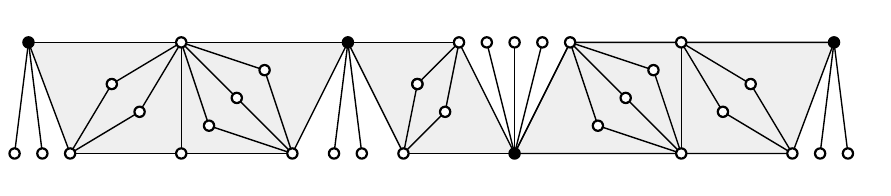}
\hspace*{\fill}%
\includegraphics[width=0.49\linewidth,page=1]{stego.pdf}
\hspace*{\fill}%
\caption{A \propertwo drawing and its graph (a stegosaurus).}
\label{fig:stegoEx}
\end{figure}

Layered drawings are further distinguished by what types of edges are allowed;
the following notation is from \cite{Sud04}.
An edge is called {\em flat} if its endpoints lie on the same layer,
{\em proper} if its endpoints lie on adjacent layers, and {\em long}
otherwise.  A {\em proper} \ld{h} contains only proper edges, 
a {\em short} \ld{h} contains no long edges, an {\em upright} \ld{h}
contains no flat edges, and an {\em unconstrained} \ld{h} permits 
any type of edge.%
\footnote{The terminology is slightly different in the paper by \Dujmovic
et al. \cite{DFK+08}; for them {\em any} \ld{h} was required to be short.}
Any graph with a planar upright \ld{h} has pathwidth at most $h$, and at most
$h{-}1$ if there are no flat edges \cite{DFK+08,FLW03}.
Any graph with a \propertwo drawing is a subgraph of a so-called \stego
(see Fig.~\ref{fig:stegoEx} and Section~\ref{sec:trees}) \cite{DBLP:journals/jgaa/BinucciCDGKKMT17};
those have pathwidth 2.

A key concept for us is where crossings can be in proper layered drawings and how to group them.
Let $G_P$ be the planarization of some graph $G$ with a fixed embedding.
As in Fig.~\ref{fig:patch}, a {\em crossing-patch} $\calC$
is a maximal connected subgraph of $G_P$ for which all
vertices are crossing-dummy-vertices.  
Let $E_{\calC}$ be the edges of $G$ that have crossings in $\calC$,
let $V_\calC$ be the endpoints of $E_{\calC}$, and let $G_C$ be the graph $(V_C,E_C)$.
Since any edge connects two adjacent layers, and a crossing-patch is
connected, we can observe:

\begin{observation}
\label{obs:crossing_patch}
If $G$ has a proper embedding-preserving layered drawing $\Gamma$ 
then all crossings of a crossing-patch $\calC$ lie strictly between 
two consecutive layers, and the vertices in $V_{\calC}$ lie on those layers.
\end{observation}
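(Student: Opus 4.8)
The plan is to control the vertical coordinates of all crossings involved in $\calC$. First I would record the basic geometry of a proper layered drawing: in $\Gamma$ every edge of $G$ joins two consecutive layers $L_j,L_{j+1}$, and since its bends must lie on layers while no layer lies strictly between two consecutive ones, such an edge has no interior bend and is drawn as a single $y$-monotone segment. In particular every point in the relative interior of that segment has $y$-coordinate strictly between $y(L_j)$ and $y(L_{j+1})$. Consequently, if a crossing-dummy $c$ lies on the curve of an edge $e$ joining $L_j$ and $L_{j+1}$, then — because $\Gamma$ is good, so a crossing is an interior point of each curve through it — the point representing $c$ has $y$-coordinate strictly between $y(L_j)$ and $y(L_{j+1})$.

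Next I would exploit that $\calC$ is connected. Two crossing-dummies joined by an edge of $\calC$ are edge-segments of $G_P$, hence they are two crossings appearing consecutively along a common edge $e$ of $G$; by the previous step both of them lie strictly between the two consecutive layers that $e$ joins. Walking along the connected graph $\calC$ then shows that all crossing-dummies of $\calC$ lie strictly between one and the same pair of consecutive layers, say $L_i$ and $L_{i+1}$. This is exactly the first assertion.

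For the second assertion, take any edge $e\in E_{\calC}$. By definition $e$ carries at least one crossing-dummy of $\calC$, which lies strictly between $y(L_i)$ and $y(L_{i+1})$. Since $e$ is proper it joins two consecutive layers, and among all pairs of consecutive layers only $\{L_i,L_{i+1}\}$ has an open horizontal strip containing that point, because the open strips determined by distinct consecutive pairs are pairwise disjoint. Hence $e$ joins $L_i$ and $L_{i+1}$, so both endpoints of $e$ lie on $L_i\cup L_{i+1}$; as $V_{\calC}$ is precisely the set of endpoints of the edges in $E_{\calC}$, we conclude $V_{\calC}\subseteq L_i\cup L_{i+1}$.

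The only mildly delicate point, and the one I would be careful to state correctly, is the claim that a crossing cannot sit on a layer line: this is where goodness of $\Gamma$ (a crossing is distinct from every vertex and the two curves cross transversally) is combined with the ``no layer strictly between consecutive layers'' feature of proper layered drawings, which forces a proper edge to have its relative interior strictly between its two layers. Everything after that is bookkeeping, and the index $i$ is simply read off from $\Gamma$, so the argument is effective.
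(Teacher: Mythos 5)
Your argument is correct and follows essentially the same route as the paper's: proper edges confine their interiors (and hence their crossings) to the open strip between two consecutive layers, connectivity of the crossing-patch forces all its crossings into one and the same strip, and the endpoints of the edges in $E_{\calC}$ must then lie on the two bounding layers. Your write-up is merely more explicit about why a crossing cannot sit on a layer line, which the paper leaves implicit.
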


\begin{figure}[tb]
\hspace*{\fill}%
\subfigure[~]{\includegraphics[width=0.26\linewidth,page=2]{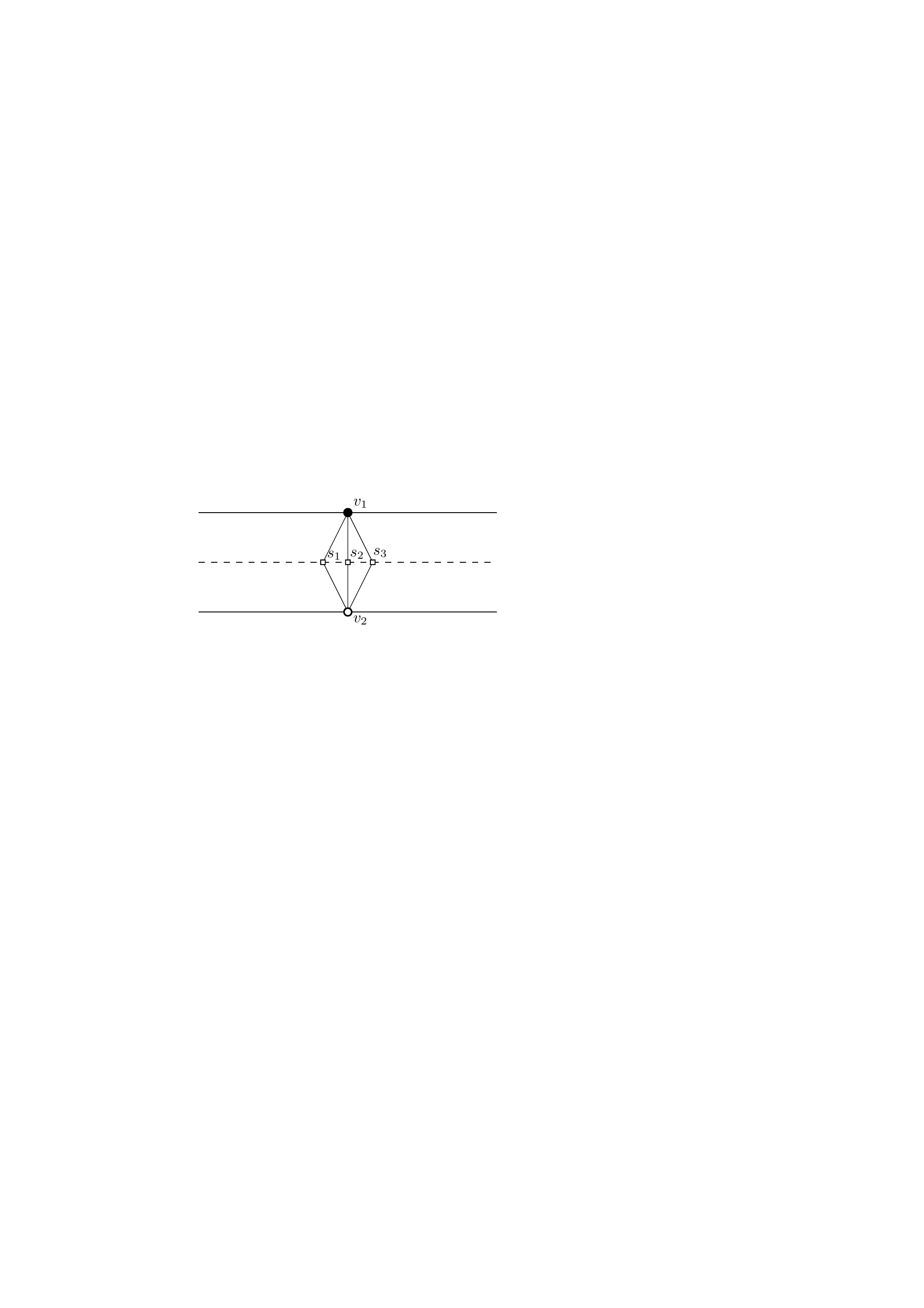} }%
\hspace*{\fill}%
\subfigure[~]{\includegraphics[width=0.26\linewidth,page=3]{replace_crossing.pdf} }%
\hspace*{\fill}%
\caption{A crossing-patch in a graph that is not fan-planar, and how to contract it.}
\label{fig:sample}
\label{fig:replace_crossings}
\label{fig:patch}
\label{fig:weakly_isomorphic}
\end{figure}

\section{Bar-Visibility representations and Pathwidth} 

In this section, we show that a graph
with a fan-planar short \ld{h} has pathwidth at most $2h-1$ (and at most
$2h-2$ if the drawing is proper).
The proof uses a
{\em bar-$c$ visibility representation}, which is an assignment of
a horizontal line segment ({\em bar}) to every vertex and a vertical line segments connecting
the bars of endpoints to every edge in such a way that bars are disjoint
and every edge-segment contains at most $c$ points (not counting the 
endpoints) that belong to bars.

\begin{figure}[tb]
\hspace*{\fill}
\subfigure[~]{\includegraphics[width=0.23\linewidth,page=2]{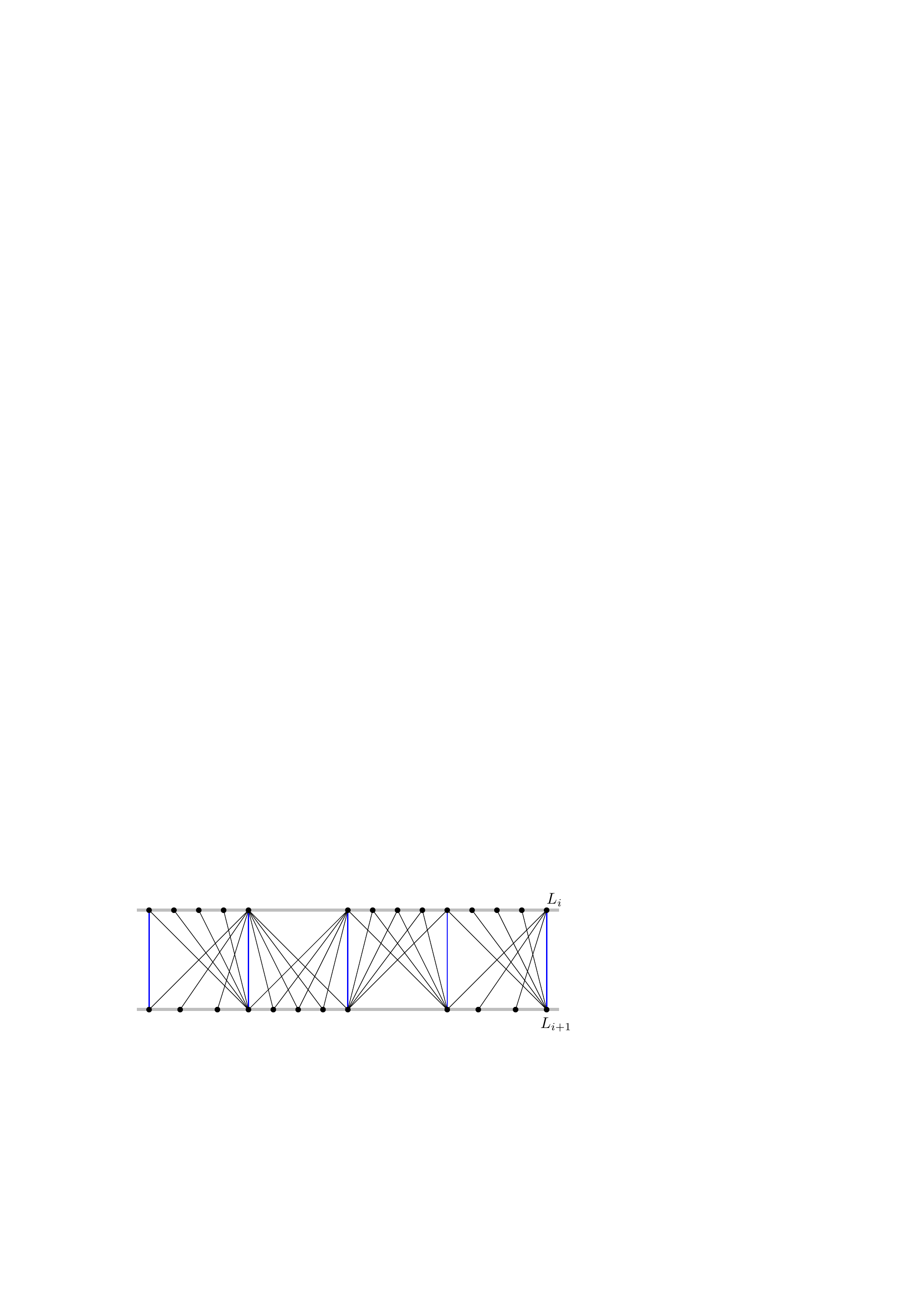}
\label{fig:bar1vis_top}}
\hspace*{-5mm}
\subfigure[~]{\includegraphics[width=0.23\linewidth,page=3]{bar1vis.pdf}
\label{fig:bar1vis_bottom}}
\hspace*{-5mm}
\subfigure[~]{\includegraphics[width=0.23\linewidth,page=4]{bar1vis.pdf}
\label{fig:bar1vis_left}}
\hspace*{-5mm}
\subfigure[~]{\includegraphics[width=0.23\linewidth,page=5]{bar1vis.pdf}
\label{fig:bar1vis_right}}
\hspace*{\fill}
\caption{A fan-planar proper \ld{2}; planar edges that separate fan-subgraphs are
blue (thick).  [For labelling-purposes we show the planar edges as
vertex-disjoint, but consecutive ones could have vertices in common.]
We show the four possible locations of center-vertices (white). 
}
\label{fig:bar1vis_categories}
\end{figure}

\begin{theorem}
\label{thm:bar1}
If $G$ has fan-planar proper \ld{h} $\Gamma$, then $G$ has a bar-1-visibility
representation.  Moreover, any vertical line intersects at most
$2h-1$ bars of the visibility representation.
\end{theorem}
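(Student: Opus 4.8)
The plan is to turn the drawing $\Gamma$ into a bar-$1$-visibility representation by working strip by strip. By Observation~\ref{obs:crossing_patch} every crossing-patch $\calC$ of $\Gamma$ lies between two consecutive layers, and its vertex set $V_\calC$ lies on those two layers; moreover fan-planarity forces the crossed edges in a strip to be organized into fans whose centers lie on one of the two bounding layers (the four cases depicted in Fig.~\ref{fig:bar1vis_categories}), with consecutive fans separated by uncrossed (``planar'') edges or by edge-free regions. I would first prove this local structure lemma precisely, showing in particular that within a strip the fans --- together with the edges that weave through them --- occupy essentially disjoint horizontal intervals, so that each fan can be handled independently.

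Second, I would build the representation. Keep the $h$ lines of $\Gamma$ for the vertices and, for each strip $S_i$ between $L_i$ and $L_{i+1}$, introduce one auxiliary horizontal line strictly between them; this yields $2h-1$ lines in total. A vertex that is not the center of a crossed fan keeps a short bar on its own layer. If $v$ is the center of a fan that is crossed by edges $e^{(1)},\dots,e^{(k)}$ inside strip $S_i$, move $v$ to the auxiliary line of $S_i$ and stretch its bar horizontally so that it spans the $x$-interval of that fan; then route each $e^{(j)}$ as a vertical segment through $v$'s stretched bar, which is the single bar that $e^{(j)}$ is permitted to pass over, and --- crucially --- this is now geometrically possible precisely because $v$ no longer lies on a line containing an endpoint of $e^{(j)}$. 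All remaining edges, including the fan edges themselves and the edges incident to a moved vertex, are drawn as vertical segments whose $x$-coordinates are chosen (using the disjointness of the fan intervals and the freedom in placing the remaining bars) so that they pass over no bar at all; in particular an edge incident to a moved center $v$ is attached near $v$'s former position on $L_i$, where that layer is now free.

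It then remains to verify the three requirements: bars are pairwise disjoint (the stretched bar of a fan center lies on an auxiliary line, hence off all layers, and distinct fan centers assigned to the same auxiliary line have disjoint stretched bars by the structure lemma), every edge passes over at most one bar (a crossing edge over its fan center, every other edge over none), and any vertical line meets at most one bar per line, hence at most $2h-1$ bars. The main obstacle I anticipate is exactly the structure lemma for a strip --- in particular ruling out ``interleaved'' or ``nested'' fans that would force two fan centers of the same strip onto two distinct auxiliary lines --- together with the bookkeeping for a vertex that is a fan center on an internal layer and must have its incident edges rerouted across that layer without incurring a second bar-crossing; once that is settled, the rest is a routine perturbation argument.
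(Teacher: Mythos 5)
Your high-level plan is the same as the paper's (designate one special bar per inter-layer strip to absorb all crossings of that strip, then count $h$ layer bars plus $h-1$ special bars), but the construction as you describe it has a genuine gap at exactly the point you flag as "the structure lemma". In a fan-planar proper layered drawing the crossed edges of a strip do not form isolated fans on disjoint intervals: they come in \emph{pairs} of mutually crossing fans occupying essentially the same horizontal interval (each fan's edges are precisely the edges crossing the other fan). Your rule "if $v$ is the center of a fan that is crossed by edges \dots, move $v$ to the auxiliary line and stretch its bar over the fan's interval" therefore applies to \emph{both} centers of such a pair, and their two stretched bars would overlap on the single auxiliary line of the strip, violating disjointness. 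The resolution requires an argument you do not supply: after making the drawing edge-maximal, each crossing-patch sits between two uncrossed "boundary" edges and consists of exactly two crossing fans, and exactly one of the two centers (the top vertex of the left boundary edge or the bottom vertex of the right one) needs special treatment; the other fan's edges can stay put and simply pass over the one special bar. A second, unaddressed problem with relocating centers to auxiliary lines is that a single vertex can be the designated center for the strip above it \emph{and} the strip below it; it cannot live on two auxiliary lines at once. The paper avoids both issues by never moving a vertex off its layer: it shifts the bars of each layer to distinct $y$-coordinates within a thin band ("diagonally") and merely \emph{extends} the chosen center's bar horizontally until it sees its diagonally opposite corner, so the extension stays in the vertex's own band and can be done independently for the strips above and below.

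There is also a smaller gap: you assert that all non-crossing edges can be realized with zero bar-crossings by "a routine perturbation argument", but producing a bar-visibility representation of the planar remainder that respects the given layer assignment and left-to-right order is itself a nontrivial step; the paper deletes the crossed edges of the designated centers, observes that what remains is a planar graph with a planar proper $h$-layer drawing, and invokes a known order-preserving bar-$0$-visibility construction for that. Note also that the paper tolerates planar edges passing over one extended bar (this is still within the bar-$1$ budget), whereas you insist they pass over none, which is a stronger claim than you need and harder to guarantee for edges lying inside a fan's $x$-interval. In summary: the counting and the one-special-bar-per-strip idea are right, but without (i) maximalization, (ii) the identification of \emph{which single} center per fan-subgraph to treat, and (iii) a handle on centers shared between adjacent strips, the construction as written does not go through.
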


\begin{proof}
In the first step, make $\Gamma$ maximal, i.e., insert all edges that can be added 
while keeping a fan-planar proper \ld{h}.  In the resulting drawing 
every crossing-patch is enclosed
by two planar edges (shown thick blue in 
Fig.~\ref{fig:bar1vis_categories}).  The subgraph between two
such planar edges consists (if it has crossings at all) of two crossing fans; we 
call this a {\em fan-subgraph}.  Studying all possible positions of these two fans,
we see that the two center-vertices include exactly one of the top vertex of the left planar edge
or the bottom vertex of the right planar edge.    
We remove the crossed edges incident to this center-vertex in the
fan-subgraph; see Fig.~\ref{fig:bar1vis_categories} where removed edges are red (dashed).
The remaining graph $G'$ is planar and has a planar proper \ld{h}.
We can convert this into a bar-0-visibility representation $\Gamma'$ where the
layer-assignment and the order within layers is unchanged \cite{Bie14},
in particular any vertical line intersects at most $h$ vertex-bars.

Next, shift bars upward until bars of each layer lie ``diagonally'', see the dark gray bars in Fig.~\ref{fig:bar1vis_categories_bars}.
More precisely, we process layers from bottom to top.
For each layer we assign increasing $y$-coordinates to the bars from left to right such that every bar has its own $y$-coordinate.

\begin{figure}[tb]
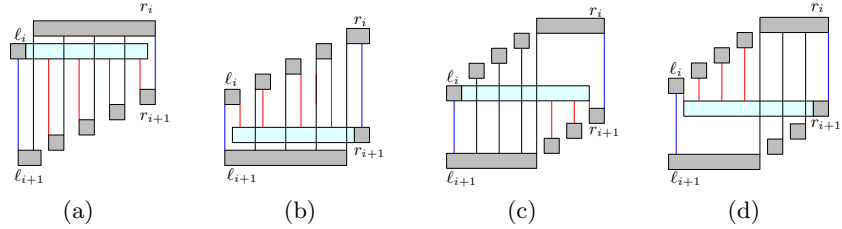

\hspace*{\fill}
\subfigure[~]{\includegraphics[width=0.2\linewidth,page=7]{bar1vis.pdf}%
\label{fig:bar1vis_top_bar}}%
\hspace*{\fill}
\subfigure[~]{\includegraphics[width=0.2\linewidth,page=8]{bar1vis.pdf}%
\label{fig:bar1vis_bottom_bar}}%
\hspace*{\fill}
\subfigure[~]{\includegraphics[width=0.2\linewidth,page=9]{bar1vis.pdf}%
\label{fig:bar1vis_left_bar}}%
\hspace*{\fill}
\subfigure[~]{\includegraphics[width=0.2\linewidth,page=10]{bar1vis.pdf}%
\label{fig:bar1vis_right_bar}}%
\hspace*{\fill}
\caption{Bar-1-visibility representation for different types of fan-subgraphs.}
\label{fig:bar1vis_categories_bars}
\end{figure}

Let the planar edges to the left and right of a fan-subgraph be $(\ell_i,\ell_{i+1})$ and $(r_i,r_{i+1})$, with vertices
indexed by layer.
The process of removing edges ensures that all of the missing edges are incident to $r_{i+1}$ or $\ell_i$.
If they were incident to $\ell_i$, then
we extend $\ell_i$ to the right until it vertically sees its diagonally opposite corner $r_{i+1}$. 
Otherwise, we extend $r_{i+1}$ to the left until it vertically sees its diagonally opposite corner $\ell_i$.
This extension realizes all removed edges of the fan-subgraph, since the extended bar can see vertically all other bars of 
vertices of the fan-subgraph.  
By our construction, the extended bars do not cross the planar edges between $\ell_i$ and $\ell_{i+1}$, or between $r_i$ and $r_{i+1}$. 
Since for each fan-subgraph there is only one extended bar, the edges of $G$ that belong to $G'$ go through at most one extended bar. 
Therefore the computed representation is a bar-1-visibility representation of $G$. 
In each fan-subgraph only one bar is extended, therefore every vertical
line intersects at most $h$ bars from the $h$ layers and at most $h-1$
bars from the $h-1$ fan-subgraphs that it traverses.
\end{proof}

\iffull
With a minor change, we can prove a similar result for short layered drawings.
\else
With a minor change (see the appendix) the construction also gives:
\fi

\begin{theorem}
\label{thm:bar1_short}
If $G$ has a fan-planar short \ld{h} $\Gamma$, then $G$ has a bar-1-visibility
representation where any vertical line intersects at most $2h$ bars of
the visibility representation.
\end{theorem}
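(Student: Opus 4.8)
The plan is to follow the proof of Theorem~\ref{thm:bar1} almost verbatim, touching only the two places where flat edges enter the picture: the structural analysis of crossing-patches and the handling of the planar subgraph in the bar-visibility representation.

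First I would make $\Gamma$ maximal among all fan-planar short \ld{h} drawings. Since a short drawing has no long edges, no edge-segment can leave the closed strip bounded by two consecutive layers except at an endpoint, so the analogue of Observation~\ref{obs:crossing_patch} still holds: every crossing-patch $\calC$ lies strictly between two consecutive layers $L_i,L_{i+1}$ with $V_\calC\subseteq L_i\cup L_{i+1}$, and in particular a crossing-patch cannot span more than one strip. Consequently every crossing-patch is again enclosed by two planar edges and the subgraph between them is again a fan-subgraph built from two crossing fans; the only new feature is that the bounding planar edges, and the uncrossed edges of the fan-subgraph, may now be flat (both endpoints on $L_i$, or both on $L_{i+1}$) rather than proper. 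I would then redo the case analysis of the positions of the two fan centers (as in Fig.~\ref{fig:bar1vis_categories} for the proper case, now augmented by the configurations created by flat bounding edges) and argue, exactly as before, that there is one center-vertex of the fan-subgraph whose crossed edges inside that fan-subgraph can be deleted; the remaining graph $G'$ is planar and still admits a planar short \ld{h} with the same layer assignment and the same left-to-right order within each layer as $\Gamma$.

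Next I would realize $G'$: by \cite{Bie14}, a planar graph with a short \ld{h} has a bar-visibility representation that preserves layers and the order within each layer, in which proper edges are drawn as vertical segments across a strip and flat edges are realized within a layer; here the presence of flat edges forces this to be a bar-$1$- rather than a bar-$0$-visibility representation and costs any vertical line one additional bar beyond the $h$ vertex-bars of the proper case. I then apply the same diagonal upward shift of each layer's bars and, for every fan-subgraph, extend the single appropriate corner bar ($\ell_i$ or $r_{i+1}$) until it sees vertically all vertices of that fan-subgraph, realizing the deleted crossed edges — verbatim as in Theorem~\ref{thm:bar1}. Counting as there, a vertical line meets at most $h$ bars from the $h$ layers, at most $h-1$ extended bars from the $h-1$ fan-subgraphs it traverses, and at most one further bar charged to the flat edges of $G'$, for a total of $2h$; I would phrase the bookkeeping so that it comes out to exactly this bound.

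The main obstacle is the case analysis of the first step once flat bounding edges are admitted: a flat bounding edge collapses to a single layer, creating fan-subgraph configurations absent from Fig.~\ref{fig:bar1vis_categories}, and I must check in each of them that one fan center can be chosen whose crossed edges are deleted without destroying the short layered drawing, and that a corner bar remains available for the later extension. The secondary point requiring care is the accounting in the last step, namely that the flat-edge visibilities of $G'$ together with the $h-1$ fan-subgraph extensions add exactly one bar, and not more, to any vertical line's budget — this is precisely where the bound degrades from $2h-1$ to $2h$ and no further.
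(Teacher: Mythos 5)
Your route is genuinely different from the paper's, and in its current form it has a gap that you yourself flag but do not close. The paper does not redo the fan-subgraph analysis at all: it deletes all flat edges to obtain a graph $G^-$ with a fan-planar proper \ld{h}, takes the intermediate bar-$0$-visibility representation $\Gamma'$ from the proof of Theorem~\ref{thm:bar1}, lengthens the bars within each layer so that the bar of each vertex $v$ ends exactly where the bar of its right neighbour $w$ begins (possible precisely because there are no long edges, so no vertical edge-segment separates them), arranges these touching points to have pairwise distinct $x$-coordinates $X_e$, and then runs the diagonal shift and corner-bar extension of Theorem~\ref{thm:bar1} unchanged. Each flat edge $(v,w)$ is afterwards realized as the vertical segment at $X_e$ joining the two now height-separated bars; since bars are only ever lengthened, this visibility survives, and since the $X_e$ are pairwise distinct, any vertical line meets at most one extra bar, giving $2h$. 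In other words, the entire flat-edge issue is reduced to a short bookkeeping argument, and the case analysis of Fig.~\ref{fig:bar1vis_categories} is reused verbatim rather than extended.

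The gap in your version is the re-done case analysis of fan-subgraphs ``augmented by the configurations created by flat bounding edges,'' which you name as the main obstacle and then assert works ``exactly as before'' without carrying it out. Two observations would let you close it (or, better, avoid it): in a short layered drawing a flat edge is a $y$-monotone curve with both endpoints at the same height, hence lies entirely on its layer; two such curves on one layer would overlap rather than cross transversally, and a proper edge meets a layer only at a vertex, so flat edges are never crossed at all. Consequently crossing-patches consist solely of proper edges and still lie strictly inside one strip (Observation~\ref{obs:crossing_patch}), and a flat edge, not spanning any strip, can never be one of the two planar edges enclosing a crossing-patch --- so the ``new configurations'' you worry about do not arise and the bounding edges remain the proper edges $(\ell_i,\ell_{i+1})$ and $(r_i,r_{i+1})$. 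A second, smaller inaccuracy: flat edges do not ``force bar-$1$ instead of bar-$0$'' --- in the paper's construction they are realized with zero bar-crossings; the degradation from $2h-1$ to $2h$ comes only from the vertical line at $x=X_e$ meeting both abutting bars of that layer, not from any edge-segment passing through a bar.
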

\iffull
\begin{proof}
Let $G^-$ be the graph obtained by removing all flat edges; this has a
fan-planar proper \ld{h} and therefore a bar-1-visibility representation 
using Theorem~\ref{thm:bar1}.    Let $\Gamma'$ be the visibility representation
(of some subgraph of $G^-$) used as intermediate step in this proof.
Lengthen the bars of $\Gamma'$ maximally so that within any layer, the bar
of one vertex $v$ ends exactly where the bar of the next vertex $w$ begins.  
(Note that no vertical edge-segment 
lies between the bars of $v$ and $w$ since there are no
long edges.)  We have some choice in how much to extend $v$ vs.~how much to
extend $w$ into the gap between them, and do this such that no two points 
where bars begin/end have the same $x$-coordinate.

Now convert this visibility representation into a bar-1-visibility 
representation $\Gamma^-$ of $G^-$ exactly as before.  We claim that this is the
desired bar-1-visibility representation of $G$.  Consider a flat edge
$e=(v,w)$, with (say) $v$ left of $w$ on their common layer.  Let $X_e$ be
the $x$-coordinate where the bar of $v$ ends and the bar of $w$ begins in
the modified $\Gamma'$.    To obtain $\Gamma^-$, these bars are first shifted
to different $y$-coordinates (without changing $x$-coordinates of their
endpoints). Since $v$ and $w$ are consecutive
within one layer of $\Gamma'$, they end of on consecutive layers of
$\Gamma^-$.  Next the bars are (possibly) lengthened, but never shortened.
Therefore edge $(v,w)$ can be inserted with $x$-coordinate $X_e$ to connect
the bars of $v$ and $w$.

It was argued in Theorem~\ref{thm:bar1} that any vertical line intersects at
most $2h-1$ bars in that construction.  The only change in our construction
is that sometimes endpoints of bars may have the {\em same} $x$-coordinate $X_e$
(for some flat edge $e$), which means that the vertical line with $x$-coordinate
$X_e$ now may intersect more bars.
However, we ensured that $X_e\neq X_{e'}$ for any two flat edges $e,e'$,
which means that even at $x$-coordinate $X_e$ the vertical line intersects
at most $2h$ bars.
\end{proof}
\fi

\begin{corollary}
\label{cor:pw}
\label{cor:proper_to_pw}
If $G$ has a fan-planar proper \ld{h}, then $pw(G)\leq 2h-2$.
If $G$ has a fan-planar short \ld{h}, then $pw(G)\leq 2h-1$.
\end{corollary}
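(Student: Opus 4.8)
The plan is to obtain the path decomposition directly from the bar-$1$-visibility representation supplied by Theorem~\ref{thm:bar1} in the proper case and by Theorem~\ref{thm:bar1_short} in the short case, using the standard vertical-sweep idea. Fix such a representation $R$ of $G$, together with the stated bound $k$ on the number of bars met by any vertical line, where $k=2h-1$ in the proper case and $k=2h$ in the short case. For a real number $x$, let $B_x$ be the set of vertices whose bar is intersected by the vertical line whose $x$-coordinate is $x$.

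Since the bars have only finitely many distinct endpoints, there are only finitely many values at which $B_x$ changes as $x$ increases; hence listing the distinct sets $B_x$ in increasing order of $x$ yields a finite sequence $P_1,\dots,P_p$, which I claim is a path decomposition of $G$. Property~(1) holds because each bar is a nondegenerate segment, so every vertex lies in some $P_t$. Property~(3) holds because, for a fixed vertex $v$, the set $\{x : v\in B_x\}$ is exactly the horizontal extent of the bar of $v$, which is an interval, so the bags containing $v$ are contiguous. Property~(2) holds because an edge $(v,w)$ is drawn as a vertical segment whose endpoints lie on the bars of $v$ and $w$ at some common $x$-coordinate $X$; hence both bars meet the line at $x$-coordinate $X$, so $\{v,w\}\subseteq B_X$ and $v,w$ share this bag. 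In a bar-$1$-visibility representation this vertical segment may additionally touch one further bar, say that of $u$, but this merely adds $u$ to $B_X$ and does not affect Property~(2).

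Finally, the width of $P_1,\dots,P_p$ equals $\max_t |P_t|-1 = \max_x |B_x|-1 \le k-1$. With $k=2h-1$ from Theorem~\ref{thm:bar1} this gives $pw(G)\le 2h-2$ when $G$ has a fan-planar proper \ld{h}, and with $k=2h$ from Theorem~\ref{thm:bar1_short} this gives $pw(G)\le 2h-1$ when $G$ has a fan-planar short \ld{h}. There is no real obstacle here: the only points needing a moment's care are that the sweep produces finitely many distinct bags (ensured by the finiteness of bar-endpoints) and that the single pass-through bar permitted by bar-$1$-visibility does not spoil Property~(2) (handled above). Essentially all of the work has already been done in Theorems~\ref{thm:bar1} and~\ref{thm:bar1_short}; this corollary just translates their bar-counting bounds into the language of path decompositions.
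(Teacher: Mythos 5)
Your proposal is correct and follows essentially the same route as the paper: both read a path decomposition off the bar-1-visibility representation of Theorems~\ref{thm:bar1} and~\ref{thm:bar1_short} by sweeping a vertical line, using the contiguity of each bar's $x$-extent for property (3) and the common $x$-coordinate of an edge's vertical segment for property (2), and then translating the bar-counting bounds $2h-1$ and $2h$ into width bounds $2h-2$ and $2h-1$. The only (cosmetic) difference is that the paper creates a bag at each edge-segment's $x$-coordinate while you record every distinct sweep set, which changes nothing of substance.
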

\begin{proof}
Take the bar-1-visibility representation of $G$ from Theorem~\ref{thm:bar1}
[respectively \ref{thm:bar1_short}]
and read a path decomposition $\calP$ from it.  To do so, sweep a vertical
line $\ell$ from left to right.  Whenever $\ell$ reaches the $x$-coordinate
of an edge-segment, attach a new bag $P$ at the right end of $\calP$ and insert
all vertices that are intersected by~$\ell$.
The properties
of a path decomposition are easily verified since bars span a contiguous set of $x$-coordinates,
and for every edge $(v,w)$ the line through the edge-segment intersects both
bars of $v$ and $w$.  Since any vertical line intersects at most $2h-1$
[$2h$, respectively] bars, each
bag has size at most $2h-1$ [$2h$] and the width of the decomposition is at most $2h-2$
[$2h-1$].
\end{proof}


We now show that the bounds of Corollary~\ref{cor:pw} are tight, even for trees.

\begin{theorem}
For any $h\geq 1$, there are trees $T^p_{2h{-}2}$ and $T^s_{2h{-}1}$ such that
\begin{itemize}
\item $T^p_{2h{-}2}$ has a fan-planar proper \ld{h} and $pw(T^p_{2h{-}2})\geq2h{-}2$, 
\item $T^s_{2h{-}1}$ has a fan-planar short \ld{h} and $pw(T^s_{2h{-}1})\geq 2h{-}1$.
\end{itemize}
\end{theorem}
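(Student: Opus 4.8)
\medskip\noindent\textbf{Proof plan.}\quad For the pathwidth lower bounds I would use the classical \emph{obstruction trees}: let $W_0$ be a single vertex and, for $k\ge 1$, let $W_k$ be obtained from three disjoint copies of $W_{k-1}$ by adding one new vertex (the \emph{root} of $W_k$) joined to the root of each copy. It is classical that $pw(W_k)=k$; the lower bound $pw(W_k)\ge k$ admits a short induction: the bags meeting a subtree-copy form an interval, all three intervals meet the interval of bags containing the root, and an ordering argument then produces a bag of size $\ge k+1$. Since pathwidth is monotone under subgraphs, it suffices to exhibit, for every $h$, a tree that admits a fan-planar proper \ld{h} and contains $W_{2h-2}$ as a subgraph, and a tree that admits a fan-planar short \ld{h} and contains $W_{2h-1}$ as a subgraph; Corollary~\ref{cor:pw} then forces equality.

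\medskip
The substantive part is constructing the drawings, which I would do by induction on $h$. The base case $h=1$ is a single vertex (for the short case, a single flat edge). For the step I would start from a fan-planar proper \ld{h-1} of $T^p_{2h-4}$, add one new layer, and attach a constant-size ``connector gadget'' that triples the previous construction \emph{twice}, so that the pathwidth grows by two rather than by one. The geometric idea --- already visible for $h=2$, where the tree is a spider with three legs of length two drawn with its center in the \emph{middle} of the opposite layer rather than at one end --- is to place every new connector vertex so that its incident edges remain \emph{short}: each such edge crosses only the edges of one neighbouring branch, so every bundle of edges crossed by a single edge has a common endpoint and the drawing stays fan-planar. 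All edges still connect consecutive layers, so the drawing is proper; it uses exactly one additional layer; and the resulting tree contains $W_{2h-2}$, giving $pw(T^p_{2h-2})\ge 2h-2$. For $T^s_{2h-1}$ I would run the same construction but realise one further level of tripling with flat edges, spending the single extra unit of vertical room that distinguishes the bound of Theorem~\ref{thm:bar1_short} from that of Theorem~\ref{thm:bar1}; this yields a fan-planar short \ld{h} whose underlying tree contains $W_{2h-1}$, and hence $pw(T^s_{2h-1})\ge 2h-1$.

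\medskip
I expect the main obstacle to be making the inductive step precise: one has to pin down the left-to-right orders on the new layer and on the formerly topmost layer, and the routing of the $O(1)$ new edges through the old subtrees, so that \emph{simultaneously} no edge becomes long, every edge crossed by another is crossed only by edges of a single fan, and the two new levels of tripling genuinely raise the pathwidth by two (rather than merely enlarging the tree). Once the orders are fixed, properness is immediate and fan-planarity can be verified fan-subgraph by fan-subgraph exactly as in the proof of Theorem~\ref{thm:bar1}; choosing the orders so that all of these requirements hold at once is the crux. Combined with Corollary~\ref{cor:pw}, this shows that both pathwidth bounds are tight, even for trees.
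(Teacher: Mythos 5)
Your proposal is essentially the paper's own argument: the obstruction trees $W_k$ are exactly the paper's $\overline{T}_k^\alpha$ (a root joined to three copies of the previous level, with the same remove-the-root pathwidth lower bound), the drawings are built by the same induction in which each added layer accommodates two further levels of tripling, and the short case gains its extra $+1$ from a flat edge. One small slip: the trees you actually draw are subdivisions of $W_k$, so they contain $W_k$ as a minor rather than a subgraph — but pathwidth is minor-monotone, so the lower bound goes through unchanged.
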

\iffull
\begin{proof}
Roughly speaking, for $\alpha\in \{s,p\}$, $T_i^{\alpha}$ is the complete ternary tree with some (but not all) edges subdivided.
To be more precise, for $h=1$, define $T_0^p$ to be a single node $r_0$, which can drawn on one layer and has pathwidth $0=2h-2$.  Define $T_1^s$ to be an edge $(r_1,\ell)$, which can be drawn as a flat edge on one layer and has pathwidth $1=2h-1$.

For $\alpha\in \{s,p\}$ and any $i$ where $T_i^\alpha$ is not yet defined,
set $\overline{T}_i^{\alpha}$ to be a new
vertex $r_i$ with three children, and make each child a root of $T_{i-1}^\alpha$.  Clearly $pw(\overline{T}_i^\alpha)\geq pw(T_{i-1}^\alpha)+1$, since removing $r_i$ from $\overline{T}_i^\alpha$ gives three components that each contain $T_{i-1}^\alpha$.    
To obtain $T_i^\alpha$ from $\overline{T}_i^\alpha$ we subdivide some edges (see below).
This cannot decrease the pathwidth, so using induction one shows that $pw(T_i^\alpha)\geq i$.

Figure~\ref{fig:completeTernary} shows that for all $i$ where $\overline{T}^\alpha_{i-2}$ is defined,
$\overline{T}^\alpha_{i}$ has a fan-planar drawing
with one more layer than used by $\overline{T}^\alpha_{i-2}$.
Furthermore, $r_i$ is in the top row, and every edge is drawn properly,
presuming we subdivide the edges incident to $r_i$.  Using induction 
therefore $T_{2h-2}^p$ and $T_{2h-1}^s$ have fan-planar \lds{h}.
\end{proof}
\else
\begin{proof}
For $\alpha\in \{s,p\}$, $T_i^{\alpha}$ is 
the complete ternary tree of height $i$ with some (but not all) edges subdivided.  See the appendix for details.  
Their drawings are shown in 
Fig.~\ref{fig:completeTernary}.
\end{proof}
 
\fi

\begin{figure}[tb]
\subfigure[~]{\includegraphics[width=0.22\linewidth,page=9,trim=55 0 60 0,clip]{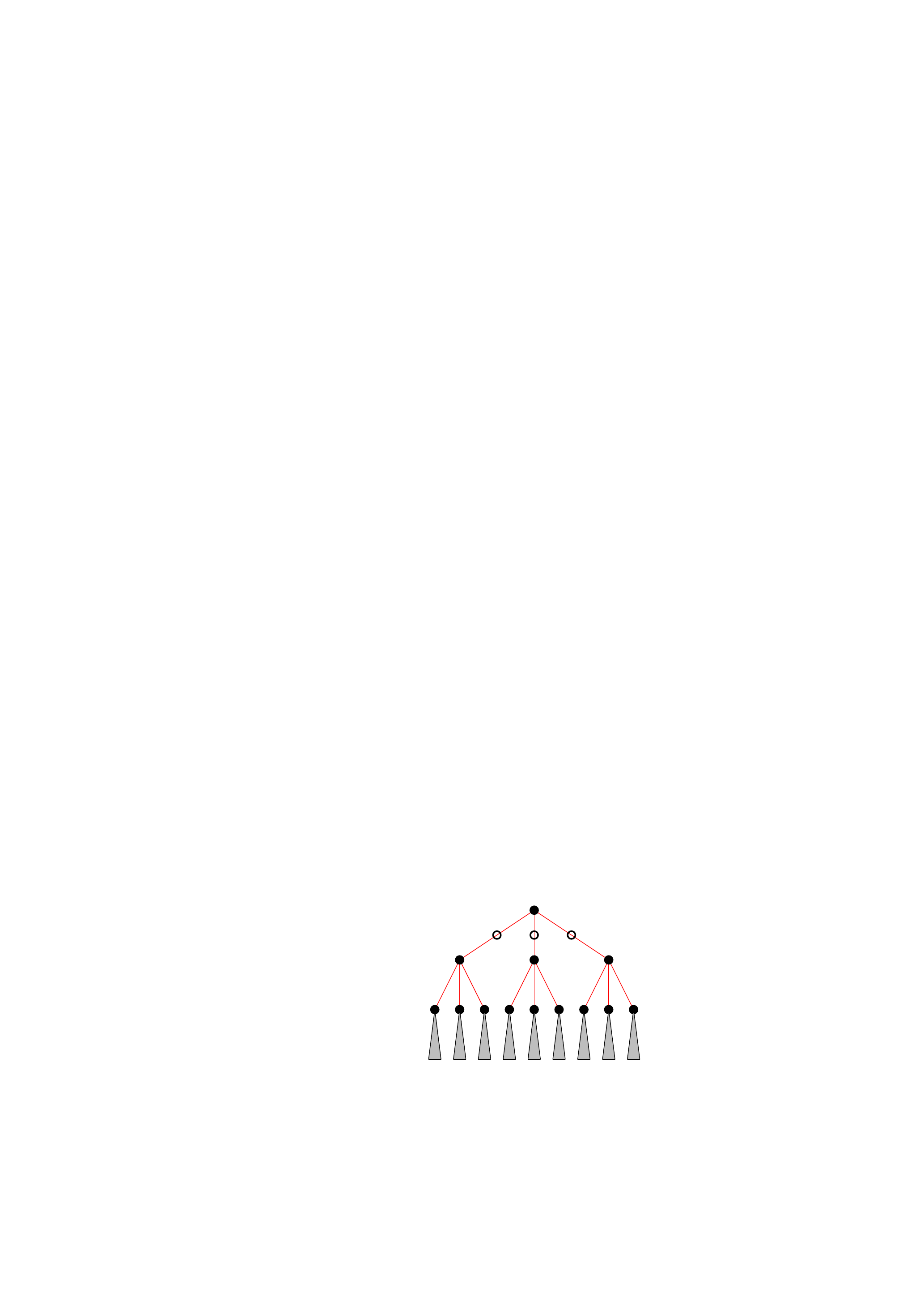}}
\hspace*{\fill}
\subfigure[~]{\includegraphics[width=0.32\linewidth,page=7,trim=30 0 30 0,clip]{ternary.pdf}}
\hspace*{\fill}
\subfigure[~]{\includegraphics[width=0.42\linewidth,page=8]{ternary.pdf}}
\caption{(a) The trees for $h=1$.  
(b) Constructing $T^\alpha_{i}$ from $T^\alpha_{i-2}$ and (c) drawing it
using one added layer.}
\label{fig:completeTernary}
\end{figure}

Note that the drawing in Fig~\ref{fig:completeTernary}(c) are fan-planar, but not 1-planar.
This naturally raises the question:  What is the pathwidth of a graph that
has a 1-planar \ld{h}?  We suspect that it cannot be more than $\approx \frac{3}{2}h$
(this remains open), and can show that for the above trees (subdivided differently)
this bound would be tight.  
\iffull
\else
See Fig.~\ref{fig:completeTernary1Planar} and the appendix for details.
\fi

\begin{theorem}
For any odd $h\geq 1$ (say $h=2k+1$ with $k\geq 0$), there are trees $T^p_{3k}$ and $T^s_{3k{+}1}$ such that 
\begin{itemize}
\item 
$T^p_{3k}$ has a 1-planar proper \ld{h} and $pw(T^p_{3k}){\geq}3k=\frac{3}{2}h-\frac{3}{2}$, and
\smallskip
\item 
$T^s_{3k{+}1}$ has a 1-planar short \ld{h} and $pw(T^s_{3k{+}1})\geq 3k{+}1=\frac{3}{2}h-\frac{1}{2}$.
\end{itemize}
\end{theorem}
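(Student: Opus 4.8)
The plan is to reuse the construction of the previous theorem, but to subdivide \emph{more} edges so that the resulting layered drawings become $1$-planar rather than merely fan-planar; the price of this restriction is a worse ratio between the number of layers and the pathwidth. For $\alpha\in\{s,p\}$ we again take $T_i^\alpha$ to be the complete ternary tree of height $i$ in which a prescribed subset of the edges (all lying in the top few levels, as dictated by the drawing below) has been subdivided. The base cases are as before: $T_0^p$ is a single vertex $r_0$, drawn on one layer with $pw(T_0^p)=0$, and $T_1^s$ is a single edge $(r_1,\ell)$, drawn as a flat edge on one layer with $pw(T_1^s)=1$. For an index $i$ for which $T_i^\alpha$ is not yet defined, let $\overline{T}_i^\alpha$ be a new vertex $r_i$ with three children, each made the root of a copy of $T_{i-1}^\alpha$, and obtain $T_i^\alpha$ from $\overline{T}_i^\alpha$ by subdividing the prescribed edges.

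The pathwidth lower bound is argued exactly as in the previous theorem: deleting $r_i$ from $\overline{T}_i^\alpha$ leaves three components, each containing a copy of $T_{i-1}^\alpha$, so $pw(\overline{T}_i^\alpha)\ge pw(T_{i-1}^\alpha)+1$, and subdividing edges never decreases pathwidth, so $pw(T_i^\alpha)\ge pw(T_{i-1}^\alpha)+1$. Starting from the two base cases, induction gives $pw(T_i^p)\ge i$ for all $i\ge 0$ and $pw(T_i^s)\ge i$ for all $i\ge 1$; in particular $pw(T_{3k}^p)\ge 3k=\frac{3}{2}h-\frac{3}{2}$ and $pw(T_{3k+1}^s)\ge 3k+1=\frac{3}{2}h-\frac{1}{2}$.

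For the drawings I would establish the following by induction: if $T_{i-3}^\alpha$ has a $1$-planar proper, resp.\ short, \ld{m} with its root on the topmost layer, then, for a suitable choice of the subdivided edges, $T_i^\alpha$ has such a drawing that uses $m+2$ layers and has $r_i$ on top. The step places $27$ copies of $T_{i-3}^\alpha$ side by side, so that all $27$ of their roots lie on a common top layer $L_3$; it adds two fresh layers $L_1,L_2$ above them; and it accommodates the remaining vertices (the $1+3+9$ vertices of the top three levels of the ternary tree, together with the vertices subdividing their incident edges) on $L_1$, $L_2$, and $L_3$, routing the corresponding edges down to the $27$ roots. The subdivisions are chosen precisely so that every resulting edge connects adjacent layers (and, when $\alpha=p$, so that no edge is flat, flat edges being permitted when $\alpha=s$), and a careful left-to-right order of the subtrees and of the spine vertices makes every edge cross at most one other edge. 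Applying the step $k$ times, starting from $T_0^p$ drawn on one layer, gives a $1$-planar proper \ld{2k+1} of $T_{3k}^p$; starting instead from $T_1^s$ drawn on one (flat) layer gives a $1$-planar short \ld{2k+1} of $T_{3k+1}^s$; see Fig.~\ref{fig:completeTernary1Planar}.

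I expect the drawing step to be the main obstacle. In the fan-planar setting two combinatorial levels of the ternary tree could be folded into a single new layer (ratio $2$), but with only $1$-planarity available one can fold merely three levels into two new layers (ratio $\frac{3}{2}$), and one must argue simultaneously that (i) the chosen routing and horizontal order of the $27$ subtrees produce at most one crossing per edge, and (ii) the chosen set of subdivisions eliminates every long edge, and every flat edge when $\alpha=p$, without creating any new obstruction. That is exactly the bookkeeping encoded in Fig.~\ref{fig:completeTernary1Planar}, and the resulting $\frac{3}{2}h$ bound matches the pathwidth upper bound conjectured above for $1$-planar \lds{h}.
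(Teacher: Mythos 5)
Your proposal matches the paper's proof essentially exactly: the same base cases $T_0^p$ and $T_1^s$, the same recursive ternary construction with the pathwidth lower bound via deleting $r_i$ and the fact that subdivision cannot decrease pathwidth, and the same inductive drawing step that folds three levels of branching (27 subtrees of $T_{i-3}^\alpha$) into two added layers, deferring the crossing and subdivision bookkeeping to Fig.~\ref{fig:completeTernary1Planar}. The paper is only slightly more explicit about which edges are subdivided (two of the three edges at $r_i$, and all child-edges at the one child whose parent-edge was not subdivided), but otherwise the arguments coincide.
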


\iffull
\begin{proof}
Define $T_0^p$ and $T_1^s$ exactly as in the previous proof; their drawings have no crossings.
Also define $\overline{T}^\alpha_i$ as before, but subdivide edges differently to obtain $T^\alpha_i$; 
see below.  
Figure~\ref{fig:completeTernary1Planar} shows that
$\overline{T}^\alpha_{i}$ has a 1-planar drawing
with two more layers than 
$\overline{T}^\alpha_{i-3}$ (for all $i$ where $\overline{T}^\alpha_{i-3}$ is defined).
Furthermore, $r_i$ is in the top row, and every edge is drawn properly,
presuming we subdivide  two edges incident to $r_i$ and all child-edges at the
child $r_{i-1}$ whose parent-edge was not subdivided.  The result now follows
using induction on $h$.
\end{proof}
\fi

\begin{figure}[tb]
\hspace*{\fill}
\subfigure[~]{\includegraphics[width=0.4\linewidth,page=7]{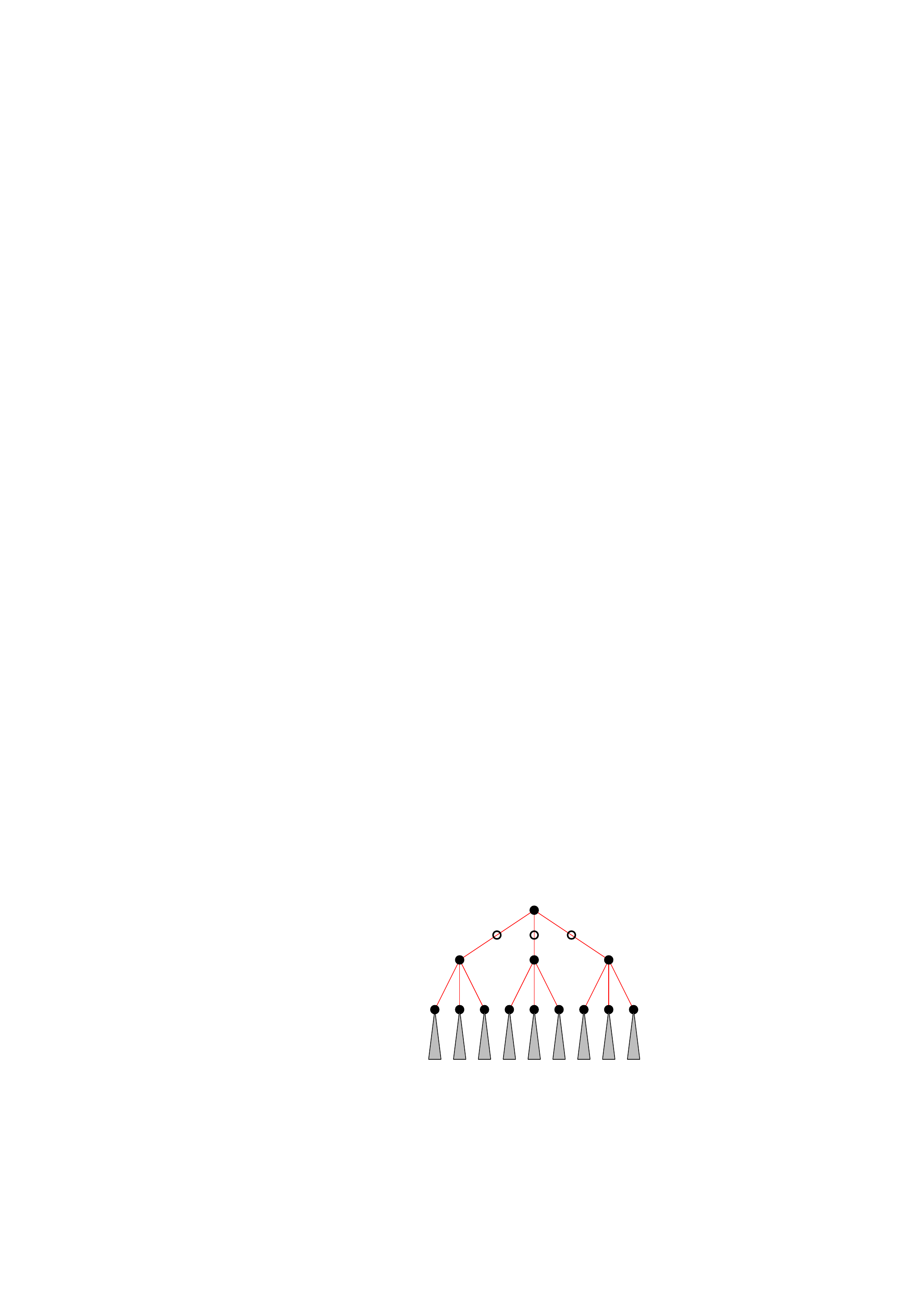}}
\hspace*{\fill}
\subfigure[~]{\includegraphics[width=0.4\linewidth,page=8]{ternary1Planar.pdf}}
\hspace*{\fill}
\caption{(a) Constructing $T^\alpha_{i}$ from $T^\alpha_{i-3}$ and (b) drawing it
using two added layers.}
\label{fig:completeTernary1Planar}
\end{figure}

\section{Testing Algorithm for Embedded Graphs}

This section presents FPT-algorithms to determine whether an 
embedded graph~$G$ has an embedding-preserving \ld{h}.
The first algorithm tests the existence of a {\em proper} drawing,
and can be applied to fan-planar graphs.
(In fact, the algorithm works for any embedded graph 
if we allow the order of crossings along an edge to change.)
A minor change allows to test the existence of {\em short} drawings instead.
For the smaller class of 1-planar graphs, yet another change allows
to test the existence of an {\em unconstrained} drawing.
All algorithms require crucially that the embedding is fixed.  

Recall that \Dujmovic et al.~\cite{DFK+08} gave an algorithm for this problem
for planar graphs where the embedding is not fixed; in the following we refer
to their algorithm as \DP.  The idea for our algorithm is to 
convert $G$ into a {\em planar} graph 
$G'$ such that $G$ has an embedding-preserving \ld{h} 
if and only if $G'$ has a plane \ld{h'} (where $h'=2h{-}1$).
One might be tempted to then appeal to \DP.
However, it is not at all clear whether \DP
could be modified to guarantee that the planar embedding is respected.
We therefore further modify $G'$ (in two steps) into a planar graph $G'''$ that has
a planar \ld{h'''} (where $h'''=12h'{+}1$) if and only if $G'$
has a plane \ld{h'}. Then call \DP on $G'''$.  

This latter step is of interest in its own
right: For plane graphs, we can test the existence of a {\em plane} 
\ld{h} in time FPT in $h$.  This improve on \DP, which permitted
changes of the embedding.

\iffull
To simplify the reductions, it is helpful to observe that \DP
allows further restrictions.  This algorithm first computes 
a path decomposition $\calP$ of small width.  
It then uses dynamic programming 
with table-entries indexed (among other things)
by the bags of $\calP$ and specifying
(among other properties) the layer for each vertex in the bag.
So we can impose restrictions on the layers that a vertex may be on.
Also, since for any edge some bag contains both endpoints, we can impose 
restrictions on the {\em span}, i.e., the distance between the layers of its 
endpoints.
We will impose even more complicated restrictions that require changing the
path decomposition a bit; this will be explained below.
\fi

\subsection{Proper drawings: Contracting Crossing Patches}
\label{sec:removing-crossings}

This section applies when we want to test the existence
of a {\em proper} \ld{h} (i.e., no long or flat edges are allowed). 
\iffull
We start with an easy lemma.  
\else
Observation~\ref{obs:crossing_patch} implies:
\fi

\begin{lemma}
\label{lem:proper_crossings}
Let $G$ be an embedded graph with a crossing-patch $\calC$, and assume
$G$ has an embedding-preserving proper \ld{h} $\Gamma$. Then in the
embedding of $G_\calC$ induced by the one of $G$, 
all vertices of $V_\calC$ are on the 
infinite region.
\end{lemma}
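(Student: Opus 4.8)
The plan is to leverage Observation~\ref{obs:crossing_patch} together with the fan-planarity structure to argue that the endpoints of crossing-edges cannot be ``buried'' inside $G_\calC$. First I would invoke Observation~\ref{obs:crossing_patch}: since $\Gamma$ is proper and embedding-preserving, all crossings of $\calC$ lie strictly between two consecutive layers, say $L_i$ and $L_{i+1}$, and every vertex of $V_\calC$ lies on $L_i$ or $L_{i+1}$. Thus the entire drawing of $G_\calC$ induced by $\Gamma$ is confined to the closed strip between $L_i$ and $L_{i+1}$, with all vertices on the two bounding lines and all edge-segments (and crossings) in the open strip. The key point is that a drawing of this restricted type, where every vertex sits on one of the two boundary lines of a horizontal strip, forces every vertex onto the outer boundary of the drawing: no bounded region of such a drawing can have a vertex of $V_\calC$ on it, because any bounded region is itself contained in the open strip and hence touches no point of $L_i\cup L_{i+1}$ except possibly along edge-segments, never at a vertex.

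Next I would make this topological observation precise. Consider the drawing $\Gamma_\calC$ of the planarization $G_\calC$ (equivalently $G_\calC$ together with its crossing-dummies, which is a plane graph) obtained by restricting $\Gamma$. Every vertex $v\in V_\calC$ lies on $L_i$ or $L_{i+1}$, and a small neighbourhood of $v$ in $\Gamma_\calC$ lies entirely on one side of that line (below $L_i$, or above $L_{i+1}$, within the strip), so locally around $v$ there is an arc from $v$ into the region of $\mathbb{R}^2$ above $L_i$ (resp.\ below $L_{i+1}$) that avoids $\Gamma_\calC$ altogether --- because $\Gamma_\calC$ lives in the closed strip. Following such an arc to infinity without meeting $\Gamma_\calC$ witnesses that $v$ is incident to the unbounded region of $\Gamma_\calC$. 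Since the induced embedding of $G_\calC$ (as a plane graph, with crossing-dummies) is exactly the combinatorial embedding read off from $\Gamma_\calC$, and the infinite region of that embedding corresponds to the unbounded region of $\Gamma_\calC$, we conclude that every $v\in V_\calC$ is on the infinite region of the embedding of $G_\calC$ induced by $G$.

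A small subtlety I would address is the relationship between ``the embedding of $G_\calC$ induced by the one of $G$'' and the embedding read off from $\Gamma_\calC$: the abstract embedding of $G$ is, by definition, the combinatorial embedding of the planarization $G_P$, and $\Gamma$ is embedding-preserving, so the planarization of $\Gamma$ is $G_P$; restricting to the sub-planarization supported on $\calC$ and $V_\calC$ yields precisely the induced embedding of $G_\calC$. Hence reasoning with $\Gamma_\calC$ is legitimate. I would also remark that although the statement mentions fan-planarity implicitly through context, the argument in fact uses only that the drawing is proper and embedding-preserving, so the lemma holds in that generality; fan-planarity is what will later guarantee that $G_\calC$ is small enough to handle.

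The main obstacle I expect is the bookkeeping in the previous paragraph: carefully distinguishing the abstract embedding of $G$, its planarization $G_P$, the crossing-patch $\calC$ as a subgraph of $G_P$, and the auxiliary graph $G_\calC=(V_\calC,E_\calC)$ whose induced embedding is the object the lemma is actually about. Once one fixes the convention that ``induced embedding of $G_\calC$'' means the plane embedding of $G_\calC$'s planarization obtained by deleting from $G_P$ everything not belonging to $\calC\cup V_\calC$, the topological core --- a drawing trapped in a horizontal strip with all vertices on its two boundary lines has all vertices on the outer region --- is short and follows directly from Observation~\ref{obs:crossing_patch}.
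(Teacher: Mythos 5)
Your proof is correct and takes essentially the same route as the paper's: both invoke Observation~1 to confine the induced drawing of $G_\calC$ to a strip between two consecutive layers with all of $V_\calC$ on the bounding lines, and then conclude these vertices lie on the unbounded region, transferring this to the induced abstract embedding via the embedding-preserving property. Your write-up just makes the topological step and the bookkeeping more explicit than the paper's three-line argument.
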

\iffull
\begin{proof}
By Observation~\ref{obs:crossing_patch}, the induced drawing of subgraph $G_\calC$
lies entirely between two layers $L_i$ and $L_{i+1}$, with $V_\calC$
on these layers and hence on the infinite region. 
Since the drawing is embedding-preserving, $V_C$ hence
is on the infinite region of the
induced embedding of $G_\calC$.  
\end{proof}
\fi

Note that the conclusion of Lemma~\ref{lem:proper_crossings} depends only on the embedding of
$G$, not on $\Gamma$, and as such can be tested
given the embedding of $G$.  In the rest of this subsection we assume
that 
it
holds for all crossing-patches, as otherwise 
$G$ has no embedding-preserving proper layered drawing 
and we can stop.

As depicted in Fig.~\ref{fig:replace_crossings}, the operation of {\em contracting a crossing-patch $\calC$} consists of 
contracting all the edge-segments within $\calC$ to obtain one vertex $c$
that is adjacent to all of $V_\calC$.  
Hence, the rotation at $c$ lists
the vertices of $V_\calC$ in the order in which they appeared on the
infinite region of $G_\calC$.
As Fig.~\ref{fig:replace_crossings} suggests, we can
convert a proper layered drawing $\Gamma$ of $G$ into  a layered
drawing $\Gamma'$ of $G'$ with
roughly twice as many layers. To be able to undo such a conversion,
observe that $\Gamma'$ has special properties.   First, it is 
{\em 2-proper}, by which we mean that for any edge 
$(v,w)$ of $G$ the vertices
$v$ and $w$ are exactly two layers apart, and the edges incident to a
contracted vertex $c$ are proper.  It also {\em 
preserves monotonicity}: for any edge $(v,w)$ of $G$ that had a
crossing, the edges $(v,c)$ and $(c,w)$ are drawn such that their
union is a $y$-monotone curve.\footnote{As discussed later these properties can be tested within \DP.} 
Since $G'$ is obtained from $G$ by contracting crossing-patches,
and each contracted vertex $c$ can be placed at a dummy-layer between
the two layers surrounding the crossing-patches, one immediately verifies:

\begin{lemma}
\label{lem:replace_crossings_forward}
Let $G$ be an embedded graph, and let $G'$ be the
graph obtained by contracting crossing-patches. 
If $G$ has an embedding-preserving proper \ld{h} $\Gamma$ then
$G'$ has a plane monotonicity-preserving 2-proper \ld{(2h{-}1)}.
\end{lemma}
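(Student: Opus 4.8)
The plan is to construct $\Gamma'$ directly from $\Gamma$ by a local modification around each crossing-patch, using a doubled set of layers so that there is room to insert the contracted vertices.

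First I would fix the layers $L_1,\dots,L_h$ of $\Gamma$ and introduce $2h-1$ layers $L'_1,\dots,L'_{2h-1}$ for $\Gamma'$, where $L'_{2i-1}$ is a copy of $L_i$ (a \emph{real} layer) and $L'_{2i}$ is a newly inserted \emph{dummy} layer placed strictly between $L_i$ and $L_{i+1}$. Every original vertex $v$ of $G$ on $L_i$ is placed on $L'_{2i-1}$ at the same $x$-coordinate; this keeps the embedding and the left-to-right order within layers. For every crossing-patch $\calC$, Observation~\ref{obs:crossing_patch} tells us all crossings of $\calC$ lie strictly between two consecutive layers $L_i$ and $L_{i+1}$, and $V_\calC\subseteq L_i\cup L_{i+1}$; hence the whole drawing of $G_\calC$ sits in the horizontal strip between $L'_{2i-1}$ and $L'_{2i+1}$. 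I would place the contracted vertex $c$ on the dummy layer $L'_{2i}$, at an $x$-coordinate inside the horizontal extent of the patch (e.g.\ at the $x$-coordinate of one of its crossings). Since the crossings of distinct patches in the same strip occupy disjoint vertical slabs (their bounding boxes are interior-disjoint because each patch is a maximal connected component of crossing dummies and the drawing is proper), the chosen $x$-coordinates for different $c$'s do not conflict and we can keep the rotation at $c$ equal to the cyclic order of $V_\calC$ around the infinite region of $G_\calC$.

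Next I would route the edges. An edge $(v,w)$ of $G$ with no crossing at all lies between two consecutive original layers $L_i,L_{i+1}$, and after the stretch it becomes a $y$-monotone curve between $L'_{2i-1}$ and $L'_{2i+1}$ that does not touch $L'_{2i}$; I would reroute it to pass through a point on $L'_{2i}$ (a bend), which is allowed since bends of layered drawings lie on layers — this makes it $2$-proper. For an edge $(v,w)$ of $G$ that does have a crossing, it belongs to exactly one crossing-patch $\calC$ (its crossings form a connected piece of $G_P$); say $v\in L_i$, $w\in L_{i+1}$. In $G'$ this edge is replaced by the two edges $(v,c)$ and $(c,w)$. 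I would draw $(v,c)$ following the initial edge-segment of $(v,w)$ from $v$ until it is ``close'' to $c$ and then going straight to $c$ on $L'_{2i}$, and symmetrically for $(c,w)$ from the final edge-segment; because $v\in L'_{2i-1}$, $c\in L'_{2i}$, $w\in L'_{2i+1}$ and both pieces can be made $y$-monotone, the edges $(v,c),(c,w)$ are proper, their union is $y$-monotone (monotonicity-preserving), and the span condition holds. The interior edge-segments of $\calC$ that were contracted simply disappear. One must check that these rerouted curves can be taken disjoint except at shared endpoints and that no spurious crossings are introduced: this follows because the new curves stay within an arbitrarily small neighbourhood of the original planarization of $G_\calC$, whose only crossings were the contracted ones, now gone.

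The main obstacle I expect is the bookkeeping that shows the resulting drawing is genuinely \emph{plane} and \emph{embedding-preserving} — i.e.\ that contracting a crossing-patch followed by this redrawing produces exactly the abstract embedding of $G'$ (the rotation system with the patch replaced by the star at $c$), with no new crossings among the rerouted fragments, and in particular that distinct patches sharing a strip do not interfere. This is intuitively clear from Fig.~\ref{fig:replace_crossings} but needs a careful argument: I would argue that each patch $\calC$ has an open disk neighbourhood $D_\calC$ in the plane meeting $\Gamma$ only in $G_\calC$ and the initial/final segments of the edges of $E_\calC$, that these disks can be taken pairwise disjoint, and that all the redrawing happens inside $D_\calC\cup L'_{2i}$; outside the disks $\Gamma$ and $\Gamma'$ agree. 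Within $D_\calC$ the new picture is a plane star $K_{1,|V_\calC|}$ with the prescribed rotation, which is precisely the planarization of the contracted patch, so the planarization of $\Gamma'$ is $G'_P$ and the drawing is plane and embedding-preserving. Everything else ($2$-properness, monotonicity-preservation, using $2h-1$ layers) is immediate from the layer assignment above.
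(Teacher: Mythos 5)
Your construction is exactly the paper's: the paper treats this lemma as immediate, inserting a dummy layer between consecutive layers, placing each contracted vertex $c$ there, and noting that the rotation at $c$ matches the order of $V_\calC$ along the infinite region of $G_\calC$. Your proposal is a correct, more detailed write-up of that same argument (the extra bend you add to uncrossed edges is harmless but unnecessary, since 2-properness only constrains the span of such edges and the edges incident to contracted vertices).
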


The other direction is not obviously true. It is easy to convert
a plane monotonicity-preserving 2-proper \ld{(2h{-}1)} of $G'$ to an \ld{h} of $G$ with the correct
rotation system and pairs of crossing edges
(the drawing is {\em weakly isomorphic} \cite{Schaefer18}).
But the {\em order} of crossings 
may change when connecting vertices by straight-line segments.
For example, in 
Fig.~\ref{fig:patch}(a),
moving the top left vertex much farther left would change the order of crossings
while keeping the rotation scheme unchanged.
So we give the other direction only for fan-planar graphs, where
this is impossible.%
\footnote{Another resolution would be to use polylines
between two layers, without requiring their bends to be on layers.  One can
argue that if $G$ had a straight-line embedding-preserving drawing, 
then such curves could be made $y$-monotone.}
\iffull
\else
A proof is in the appendix.
\fi

\begin{lemma}
\label{lem:replace_crossings_backward}
Let $G$ be a fan-plane graph, and let $G'$ be the
graph obtained by contracting crossing-patches. 
If $G'$ has a plane monotonicity-preserving 2-proper \ld{(2h{-}1)} $\Gamma'$ 
then $G$ has a fan-plane proper \ld{h}. 
\end{lemma}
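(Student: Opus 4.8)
The plan is to reverse the construction of Lemma~\ref{lem:replace_crossings_forward}: starting from $\Gamma'$, I first collapse every dummy-layer so that each contracted vertex $c$ comes to lie strictly between its two neighbouring ``real'' layers, producing an intermediate drawing on $h$ layers in which every edge of $G$ passes through exactly one point where its incident contracted vertex $c$ sits. Replacing each such $c$ by the crossing-patch $\calC$ it came from (rotated/scaled to fit the small region around the point) yields a drawing $\Gamma$ of $G$ on $h$ layers. Because $\Gamma'$ is $2$-proper, every edge of $G$ had its endpoints two layers apart in $\Gamma'$, hence one (real) layer apart after collapsing, so $\Gamma$ is a proper \ld{h}. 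Because $\Gamma'$ is monotonicity-preserving, the two edge-segments $(v,c)$ and $(c,w)$ form a $y$-monotone curve, so after re-inflating the patch each edge of $G$ is still $y$-monotone. And because $\Gamma'$ is plane and the rotation at $c$ records exactly the cyclic order of $V_\calC$ along the infinite region of $G_\calC$, the local picture around the re-inserted patch can be chosen to reproduce $G_\calC$'s embedding; together with the unchanged rotations at all vertices of $V$ this makes $\Gamma$ weakly isomorphic to the original fan-planar drawing of $G$, i.e.\ it has the correct rotation system and the correct pairs of crossing edges.

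The one thing this argument does not immediately give — and, as the text right before the lemma stresses, the only genuinely non-trivial point — is that $\Gamma$ is \emph{embedding-preserving}, i.e.\ that the \emph{order} of crossings along each edge of $G$ agrees with the prescribed embedding. Here is where fan-planarity is used, and this is the step I expect to be the main obstacle. The claim I would prove is: \emph{if $G$ is fan-plane, then any drawing of $G$ that is weakly isomorphic to the fixed fan-plane embedding is already embedding-preserving.} The idea is that the order of crossings on an edge $e$ is forced by fan-planarity. Consider an edge $e=(v,w)$ and the fan $\{e_1,\dots,e_d\}$ of edges it crosses, all sharing the center-vertex $u$. Walking along $e$ from $v$ to $w$, the order in which $e$ meets $e_1,\dots,e_d$ is the same as the cyclic order of $e_1,\dots,e_d$ in the rotation at $u$ restricted to the side of $u$ that $e$ separates from $v$ (equivalently: the order of the ``far'' endpoints of the $e_i$ as seen from the region between $u$ and $e$). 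Since the rotation at $u$ is fixed by weak isomorphism, and which endpoint of each $e_i$ is the center is fixed by the pairing of crossing edges, the crossing order on $e$ is determined. One must also handle an edge $e$ that is itself crossed \emph{and} crosses others; but in a fan-planar drawing the set of edges crossing a given edge $e$ all share an endpoint, while $e$ itself, where it crosses other edges, plays the role of a spoke in some other fan, and these two roles cannot interfere along $e$ because a crossing on $e$ is a single transversal point. A short case analysis (using that $\Gamma'$ is plane, so no two patches interact, and that $y$-monotonicity prevents an edge from ``doubling back'') completes this.

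Concretely I would organize the write-up as: (1)~collapse dummy-layers and re-inflate each crossing-patch, obtaining $\Gamma$ on $h$ layers; verify properness and $y$-monotonicity from $2$-properness and monotonicity-preservation; (2)~verify that $\Gamma$ has the same rotation system and the same set of crossing pairs as the given embedding (weak isomorphism), using that contracting and re-inflating $\calC$ preserves the rotation at $c$ and hence the cyclic order of $V_\calC$, together with Lemma~\ref{lem:proper_crossings} which guarantees $V_\calC$ lies on the infinite region of $G_\calC$ so that the patch really can be drawn in a small disk; (3)~prove the fan-planarity lemma that weak isomorphism plus fan-planarity forces the crossing order, and conclude that $\Gamma$ is embedding-preserving, hence a fan-plane proper \ld{h}. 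The only delicate part is step~(3); steps (1)–(2) are the straightforward inverse of Lemma~\ref{lem:replace_crossings_forward}.
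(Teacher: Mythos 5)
Your proposal identifies the right difficulty (the order of crossings along an edge of $G$ is not automatically preserved when the drawing is reconstructed) and the right resolution (fan-planarity together with the good-drawing axioms forces that order), and your step~(3) is essentially the paper's central argument: for an edge $e$ crossing a fan $e_1,\dots,e_d$ with center $u$ and far endpoints $w_1,\dots,w_d$, the order of the endpoints of $e,e_1,\dots,e_d$ along the infinite region of $G_\calC$ is forced to be $u$, then one endpoint of $e$, then $w_1,\dots,w_d$, then the other endpoint of $e$ (because no two edges cross twice and adjacent edges do not cross); that order is exactly the rotation at the contracted vertex $c$, hence it is preserved by $\Gamma'$ and translates into the left-to-right order of $V_\calC$ on the two layers, which determines the crossing order of straight segments.

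However, step~(1) as written does not work: replacing $c$ by a rotated/scaled copy of the crossing-patch inside a small disk does not yield a proper \ld{h} in the paper's sense, because the definition requires edges to be $y$-monotone polylines \emph{all of whose bends lie on layers}, so a proper edge must be a single straight segment between consecutive layers; the re-inflated patch would introduce bends strictly between layers. (The paper's footnote before the lemma explicitly mentions this alternative and notes the extra work it would require.) The paper instead deletes $c$ and re-inserts every edge of $E_\calC$ as a straight segment between the two surrounding layers, and then proves, via exactly your step~(3), that this straight-line reconstruction is embedding-preserving. So your write-up has steps~(1) and~(3) working at cross purposes: re-inflation gives embedding-preservation for free but breaks the layered-drawing format, while the straight-line reconstruction respects the format but is the one that actually needs step~(3). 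Dropping the re-inflation detour and applying your step~(3) directly to the straight-line reconstruction recovers the paper's proof; with that repair the proposal is correct and follows the same approach.
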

\iffull
\begin{proof}
Consider any crossing 
patch $\calC$ of $G$ that was contracted into vertex $c$,
say $c$ is on layer $L_i$ in $\Gamma'$.  Since the drawing
is 2-proper, all neighbours of $c$ are on $L_{i-1}$ or $L_{i+1}$.
Since for any edge $(v,w)$ in $E_\calC$ the endpoints are two layers apart,
therefore $v\in L_{i-1}$ and $w\in L_{i+1}$ or vice versa.  Remove the 
edges incident to $c$ and re-insert the edges in $E_\calC$ as straight-line
segments.

Since the rotation at
$c$ is respected, the order of $V_{\calC}$ on $L_{i-1}\cup L_{i+1}$ reflects the
order along the infinite region of $G_\calC$.
Two edges $e,e'$ in $E_{\calC}$ crossed in $G$ if and only if their
endpoints alternated in the order along the infinite region of $G_\calC$,
and so they cross in the resulting drawing as needed.

Assume an edge $e=(u,w)$ in $E_{\calC}$ crosses edges $e_1,\dots,e_d$ in $G$, 
in this order while walking from $u$ to $w$.  It suffices to argue that the
same order of crossings happens in the created drawing.
Let $v$ be the common endpoint of $e_1,\dots,e_d$, say $e_i=(v,w_i)$
for $i=1,\dots,d$.  We know that endpoints of $e,e_1,\dots,e_d$ are on 
the infinite region of $G_{\calC}$ since they belong to $V_{\calC}$.
Furthermore, their (clockwise or counter-clockwise) order along the 
infinite region must be exactly $v,u,w_1,\dots,w_d,w$ since we have
a good drawing. Namely, for any $i\in 1,\dots,d$ vertex $v$ must be
separated from $w_i$ in the order by $\{u,w\}$, otherwise
$e$ and $e_i$ would have to cross twice since they cross at least once.
Also, for any $i<j$, if the order along the infinite region is
$u,w_j,w_i,v$ while the order along $e$ is $u,e_i,e_j,v$, then
$e_j$ and $e_i$ would have to cross each other between where they cross $e$ and their
endpoints $w_i$ and $w_j$. 
In a good drawing no two edges cross twice and edges with
a common endpoint do not cross, so both are impossible.

Assume up to symmetry that $v\in L_{i-1}$, which means that $w_1,\dots,w_d$
are on $L_{i+1}$.  Since the rotation at $c$ contains $v,u,w_1,\dots,w_d,w$ 
in this order, $w_1,\dots,w_d$ are on layer $L_{i+1}$ in this order,
and edge $e$ crosses
$e_1,\dots,e_d$ in this order as desired.

Repeating this operation at all crossing patches hence gives a drawing of $G$
that respects the embedding.  After deleting
even-indexed layers (which contained no vertices of $G$), we obtain a
fan-plane proper \ld{h} of $G$.
\end{proof}
\fi
%
%

\subsection{Flat and long edges}

We will discuss in a moment how to test whether a graph has
a plane \ld{(2h{-}1)} that is monotonicity-preserving and 2-proper, 
but first study modifications
that allow us to test for short drawings (i.e., to allow flat edges)
and unconstrained drawings. 

Only minimal changes are needed when flat edges are allowed.  
Observation~\ref{obs:crossing_patch}, and therefore Lemma~\ref{lem:proper_crossings}
continue to hold.  
When there are no long edges, flat edges never have crossings. 
So it suffices to allow edges without crossings to have span 0 in $G'$.  
We say that a layered drawing $\Gamma'$ of $G'$ is {\em 2-short} if 
for any edge $(v,w)$ of $G$ the vertices
$v,w$ are either zero or two layers apart, and the edges incident to a
contracted vertex $c$ are proper.    
\iffull
\else
As in
Lemma~\ref{lem:replace_crossings_forward}
and \ref{lem:replace_crossings_backward} one shows:
\fi

\begin{lemma}
\label{lem:replace_crossings_short}
Let $G$ be a fan-plane graph, and let $G'$ be the
graph obtained by contracting crossing-patches. 
$G$ has a fan-plane short \ld{h} if and only if 
$G'$ has a plane monotonicity-preserving 2-short \ld{(2h{-}1)}.
\end{lemma}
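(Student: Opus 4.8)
The plan is to mirror the proofs of Lemmas~\ref{lem:replace_crossings_forward} and~\ref{lem:replace_crossings_backward} almost verbatim; the only genuinely new feature is the presence of flat edges, and these turn out to be harmless. Since $G$ has no long edges, no flat edge of $G$ is ever crossed (as already observed, flat edges have no crossings when there are no long edges), so a flat edge of $G$ never lies in $E_\calC$ for any crossing-patch~$\calC$. Hence contracting crossing-patches leaves every flat edge untouched, and such an edge is exactly one of the span-$0$ edges allowed by the definition of a 2-short drawing. Beyond this bookkeeping, I expect no real obstacle: the only point needing a line of care is that a flat edge stays flat and uncrossed both when layers are doubled (forward) and when the dummy layers are removed (backward), and this is immediate from $y$-monotonicity and the absence of long edges.

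For the forward direction, suppose $G$ has a fan-plane short \ld{h} $\Gamma$. The construction is identical to that of Lemma~\ref{lem:replace_crossings_forward}: insert a dummy-layer between every pair of consecutive layers (obtaining $2h-1$ layers) and replace each crossing-patch $\calC$ by its contracted vertex $c$, placed on the dummy-layer strictly between the two layers that enclose $\calC$ by Observation~\ref{obs:crossing_patch}. Every edge at $c$ is then proper, each formerly crossed edge of $G$ becomes a $y$-monotone path $v$--$c$--$w$ so the drawing is monotonicity-preserving, each uncrossed proper edge now has span $2$, and each uncrossed flat edge simply stays on its (odd-indexed) layer with span $0$. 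The rotation at each vertex of $G$ is unchanged, and the rotation at each $c$ equals the cyclic order of $V_\calC$ along the infinite region of $G_\calC$. Thus $G'$ has a plane monotonicity-preserving 2-short \ld{(2h{-}1)}.

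For the backward direction, suppose $G'$ has a plane monotonicity-preserving 2-short \ld{(2h{-}1)} $\Gamma'$. Since $G$ is connected and every edge of $G$ is represented in $\Gamma'$ either as a span-$0$ or span-$2$ edge or (if it was crossed) as a $y$-monotone path $v$--$c$--$w$ with both halves proper, all vertices of $V(G)$ lie on layers of a single parity; after relabelling I may assume these are the odd-indexed layers, so every contracted vertex $c$, joined only to $V(G)$-vertices by proper edges, lies on an even-indexed layer. From here the argument is that of Lemma~\ref{lem:replace_crossings_backward}: at each patch $\calC$, delete the edges at $c$ and re-insert $E_\calC$ as straight-line segments; the rotation at $c$ forces the order of $V_\calC$ on the two enclosing layers to agree with their order along the infinite region of $G_\calC$, so the original crossings reappear, and the fan-planarity argument — using that in a good drawing the endpoints of a fan appear in the forced cyclic order $v,u,w_1,\dots,w_d,w$ along the infinite region of $G_\calC$ — shows that along each crossed edge the crossings occur in their original order. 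Finally, deleting the even-indexed layers (which contain no vertex of $G$) turns span-$2$ edges into proper edges and span-$0$ edges into flat edges, yielding a fan-plane short \ld{h} of $G$.
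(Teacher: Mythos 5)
Your proof is correct and follows essentially the same route as the paper, which itself just calls the forward direction straightforward and reduces the backward direction to the proof of Lemma~\ref{lem:replace_crossings_backward}. One remark: although your opening paragraph frames flat edges as the only new issue, the point the paper actually singles out as essential is that for a \emph{crossed} edge $(v,w)$ the 2-short condition alone would permit $v$ and $w$ to lie on the same layer, and it is the monotonicity-preserving hypothesis (together with the properness of the edges at $c$) that forces them onto $L_{i-1}$ and $L_{i+1}$ --- your argument does use this correctly when you describe the crossed edge as a $y$-monotone path with proper halves, so the substance is there.
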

\iffull
\begin{proof}
The forward-direction is straightforward.  The backward direction
is proved almost exactly as in 
Lemma~\ref{lem:replace_crossings_backward},
except that preserving monotonicity is now vital (while it was not
actually needed in Lemma~\ref{lem:replace_crossings_backward}).
Namely, if $(v,w)$ is an edge involved in some crossing-patch
that was contracted to vertex $c$, then a 2-short drawing
permits $v$ and $w$ to be on the same layer, e.g. both above the
layer $L_i$ of $c$.  But monotonicity-preserving (and proper edges
incident to $c$) force them to be on layers $L_{i-1}$ and $L_{i+1}$
instead and the rest of the proof can proceed as before.
\end{proof}
\fi

Long edges pose difficulties because Observation~\ref{obs:crossing_patch}
no longer holds.  However, in a 1-plane graph $G$ every crossing-patch has
a single crossing, i.e., contracting crossing-patches is simply planarizing $G$.  
This crossing therefore either lies
between two layers or (if a long edge crosses a flat edge) exactly on a layer.
Define a drawing of $G'$ to be {\em 2-unconstrained} if every 
vertex of $G$ lies on an odd-indexed layer.  The following is shown almost exactly as
Lemma~\ref{lem:replace_crossings_forward}-\ref{lem:replace_crossings_short}; we leave the details to the
reader.

\begin{lemma}
\label{lem:1planePlanarization}
Let $G$ be a 1-plane graph and let $G'$ be its planarization.  
Then $G$ has a 1-plane unconstrained \ld{h} if and only if
$G'$ has a plane monotonicity-preserving 2-unconstrained \ld{(2h{-}1)}. 
\end{lemma}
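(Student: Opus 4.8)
The plan is to follow the template of Lemmas~\ref{lem:replace_crossings_forward}, \ref{lem:replace_crossings_backward} and \ref{lem:replace_crossings_short} and prove the two directions separately, using throughout that a $1$-plane graph has only single-crossing crossing-patches. Thus ``contracting crossing-patches'' is exactly planarizing $G$, and by Observation~\ref{obs:crossing_patch}'s failure-mode for long edges, each crossing of a $1$-plane \ld{h} lies either strictly between two consecutive layers or exactly on one layer --- both of which turn out to be harmless.

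\emph{Forward direction.} Given a $1$-plane unconstrained \ld{h} $\Gamma$ of $G$, I would rescale so that layer $L_i$ sits at height $2i{-}1$ (call it $L'_{2i-1}$) and introduce a new layer $L'_{2i}$ at height $2i$. Every vertex of $G$ stays on an odd layer and every bend of every edge stays on an odd layer, so the drawing is $2$-unconstrained. For a crossing that in $\Gamma$ lay strictly between $L_i$ and $L_{i+1}$ I place the corresponding dummy-vertex on $L'_{2i}$; since $G$ is $1$-plane, each edge-arc inside the strip between $L'_{2i-1}$ and $L'_{2i+1}$ carries at most one crossing, so a local deformation of that strip (keeping every arc $y$-monotone) moves all these crossings onto $L'_{2i}$ without changing the planarization. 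A crossing that lay on $L_i$ already sits on the odd layer $L'_{2i-1}$ and its dummy-vertex goes there. In every case the union of the two sub-edges into which an edge of $G$ is split is still the original $y$-monotone curve, so the drawing is monotonicity-preserving, and it is plane because it is literally the planarization of (the deformed) $\Gamma$. This is the analogue of Lemma~\ref{lem:replace_crossings_forward}.

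\emph{Backward direction.} Given a plane monotonicity-preserving $2$-unconstrained \ld{(2h{-}1)} $\Gamma'$ of $G'$, I would first undo the planarization as in Lemma~\ref{lem:replace_crossings_backward}: at each crossing-dummy $c$ (of degree $4$, carrying the halves $(u,c),(c,w)$ and $(u',c),(c,w')$ of two edges $e,e'$ of $G$) delete $c$ and re-merge $e=(u,c)\cup(c,w)$ and $e'=(u',c)\cup(c,w')$. Monotonicity-preservation makes each merged edge a $y$-monotone polyline, and the point formerly occupied by $c$ becomes a transversal crossing of $e$ and $e'$ only, since in the planar $\Gamma'$ the half-edges crossed nothing. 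Because $G$ is $1$-plane, each edge of $G$ carries at most one dummy and hence at most one crossing, so the \emph{order} of crossings along an edge --- which in Lemma~\ref{lem:replace_crossings_backward} forced the fan-planarity hypothesis --- is vacuous, and the resulting drawing realizes exactly the embedding of $G$. To make this an \ld{h}, relabel $L'_{2i-1}$ as $L_i$: now every vertex of $G$ lies on a layer, and crossings may sit anywhere, in particular on a former even layer $L'_{2i}$ (a non-layer height), which is harmless. The only obstruction is a \emph{bend} of an edge of $G$ on some $L'_{2i}$; to remove all of these I would, inside each strip between $L'_{2i-1}$ and $L'_{2i+1}$, redraw every edge-arc as the straight segment joining its fixed endpoints on the two bounding layers. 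Since the current drawing is good, two arcs in one strip cross at most once, so their crossing is determined purely by whether their left-to-right orders on the two bounding layers agree --- which straightening preserves --- hence no crossing is created or destroyed, and afterwards all bends lie on former odd layers. Deleting the now vertex-free and interior-bend-free even layers yields the desired $1$-plane unconstrained \ld{h}; as in Lemma~\ref{lem:replace_crossings_short}, monotonicity-preservation is again essential, preventing a split edge from collapsing both endpoints to the same side of its crossing.

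I expect the only genuinely delicate point to be the last step of the backward direction: justifying that the dummy layers can simply be discarded. The resolution is that a crossing, unlike a vertex or a bend, is under no layer constraint, so crossings left between the real layers cost nothing, while the finitely many bends that landed on dummy layers can be straightened away strip-by-strip without disturbing the single crossing of each edge. Everything else is a routine adaptation of Lemmas~\ref{lem:replace_crossings_forward}--\ref{lem:replace_crossings_short}, with the $1$-planarity of $G$ being precisely what makes the crossing-order bookkeeping trivial, exactly as anticipated in the sentence introducing this lemma.
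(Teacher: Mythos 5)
Your proof is correct and follows exactly the route the paper intends: the paper gives no proof of this lemma, stating only that it is shown ``almost exactly as'' the preceding contraction lemmas and leaving the details to the reader, and your two directions are a faithful execution of that adaptation (using $1$-planarity to make the crossing-order bookkeeping vacuous, and monotonicity-preservation to re-merge half-edges into $y$-monotone curves). The one step the template does not literally cover --- eliminating bends that land on the discarded even layers by straightening each edge-arc strip-by-strip --- is handled correctly, since two $y$-monotone arcs spanning a strip cross exactly once iff their endpoint orders on the two bounding layers disagree, so straightening preserves the set of crossings and the embedding.
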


\iffull
\subsection{Enforcing a rotation scheme}
\else
\subsection{Enforcing a planar embedding}

\fi
\label{sec:enforcing-rotation}

Recall that we want a {\em plane} drawing of $G'$ while \DP tests the
existence of {\em planar} drawings.
As the next step we hence turn $G'$ into a graph $G''$ that is 
a subdivision of a 3-connected planar graph (hence has a
unique planar rotation scheme).  
There are many ways of making a planar graph 3-connected
(e.g. we could triangulate the graph or stellate every face), but we need to
use a technique here that allows to relate the height of layered
drawings of $G'$ and $G''$, and this seems hard when using triangulation
or stellation.  

Instead we use a different idea, which is easier to describe from the point of view of
{\em angles} of $G'$, i.e.,  incidences between a vertex
$v$ and a region $f$.  
\iffull
(A vertex may be incident to a face repeatedly, in case
of which this gives rise to multiple angles, but it should be
clear from the context which of them we mean.)
\fi
The operation of {\em filling the angles of $G'$} consists of two
steps.  
First, replace every edge $e$ of $G'$ by a {\em tripler-graph} $H$
\iffull
; $H$
consists of three (subdivided) copies of $e$ with some edges added to
make $H$ an inner triangulation 
\else
\fi
(see Fig.~\ref{fig:filler-paths}(b)).  
\iffull
Now add a {\em filler path} at every angle $v,f$ of $G'$ as follows.
Let $u,w$ be the clockwise/counter-clockwise
neighbour of $v$ on $f$ in $G'$.    Let $(v,u')$ and $(v,w')$ be the
edges of the tripler-graphs of $(v,u)$ and $(v,w')$ that are now
on $f$.
Add a subdivided edge between $u'$ and $w'$ and place it inside face $f$.  
\else
Then connect the tripler-graphs incident to each face via {\em filler paths}
of length 2.  One can argue (see the appendix for details) that $G''$ is
a subdivision of a 3-connected planar graph, and as
Fig.~\ref{fig:filler-paths} illustrates, it can be drawn using three times as
many layers. 
\fi

\begin{figure}[tb]
\hspace*{\fill}
\subfigure[~]{\includegraphics[width=0.2\linewidth,page=3]{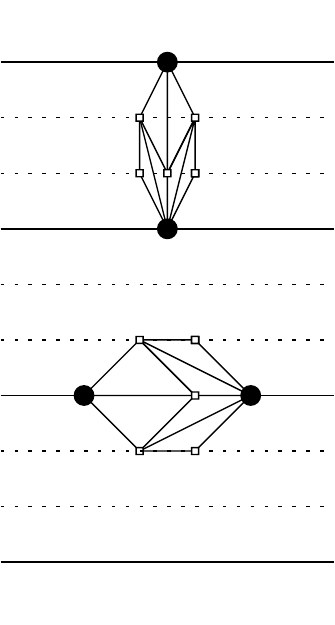}}
\hspace*{\fill}
\subfigure[~]{\includegraphics[width=0.2\linewidth,page=1]{filler_path.pdf}
\hspace*{-10mm}
\includegraphics[width=0.2\linewidth,page=7]{filler_path.pdf}}
\hspace*{\fill}
\subfigure[~]{\includegraphics[width=0.2\linewidth,page=4]{filler_path.pdf}}
\hspace*{\fill}
\hspace*{\fill}
\caption{(a) A very small plane graph $G'$. (b) Replacing a proper, flat or long edge with a tripler-graph. 
(c) Graph $G''$ obtained by filling all 
angles.    Some edges from the tripler-graphs are not shown.
}
\label{fig:doubleK2}
\label{fig:filler-paths}
\end{figure}

\iffull
\begin{lemma}
\label{lem:3connected}
Let $G'$ be a plane graph.  Let $G''$ be a graph obtained by filling the
angles of $G'$.  Then $G''$ is a subdivision of a 3-connected planar graph.
\end{lemma}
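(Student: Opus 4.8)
The plan is to verify 3-connectivity of the graph $G''$ underlying the subdivision, using the standard criterion that a planar graph is 3-connected if and only if every face is bounded by a cycle and every two faces share at most one edge, or equivalently by exhibiting three internally disjoint paths between any two vertices. First I would set up notation: let $G'$ be the given plane graph, let $H_e$ denote the tripler-graph replacing an edge $e=(v,w)$, and recall that $H_e$ is an inner triangulation on three parallel $v$--$w$ paths (a "theta-like" subdivided graph); in particular $H_e$ is internally $3$-connected between $v$ and $w$, i.e. there are three internally disjoint $v$--$w$ paths inside $H_e$. Then I would describe the filler paths: around each face $f$ of $G'$, the tripler-graphs of the edges on $f$ contribute one subdivided path each to the boundary of (the corresponding region in) the new drawing, and the filler paths of length $2$ connect consecutive such paths at the vertices of $G'$ on $f$, so that the outer boundary inside $f$ becomes a single long cycle triangulating $f$ together with the innermost tripler-paths.

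The key steps, in order, would be: (1) show that every vertex of $G''$ has degree at least $3$ after suppressing the subdivision vertices — original vertices of $G'$ keep their $G'$-degree (each $\ge 2$ since $G'$ is connected with $\ge 2$ edges, and one checks the degree-$2$ case of a leaf separately, where the filler paths raise the degree), tripler-graph internal vertices have degree $3$ by construction, and filler-path attachment vertices have degree $3$; (2) show there is no $2$-cut. For (2) I would argue by cases on where a hypothetical separating pair $\{x,y\}$ lies. If both lie inside a single tripler-graph $H_e$, use the internal $3$-connectivity of $H_e$ plus the fact that $H_e$'s attachment vertices $v,w$ are further connected through the rest of $G''$ via filler paths around the two faces incident to $e$. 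If $x,y$ lie in different tripler-graphs or one is an original vertex, exploit that around every face the filler paths plus tripler-paths form a cycle, so removing two vertices cannot disconnect that cyclic structure; and that the whole of $G''$ is "glued" along these facial cycles, each pair of which overlaps in a connected piece. The cleanest route is probably to show directly that the planarization $G''$ has the property that every face is a triangle or a subdivided triangle and no two faces share more than one (subdivided) edge, which by Steinitz-type arguments gives $3$-connectivity of the suppressed graph; filling angles was designed precisely so that every face of $G''$ is bounded by three "sides", each side being a tripler-path or a filler path.

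The main obstacle I expect is the bookkeeping for vertices of $G'$ of low degree, in particular degree-$1$ vertices (leaves) and degree-$2$ vertices, and the case where consecutive edges around a face at such a vertex belong to the same face twice (a vertex incident to a face more than once, or a bridge). At a leaf $v$ of $G'$ with unique incident edge $e$, the single face around $v$ contributes one filler path joining the two outer tripler-paths of $H_e$ "around" $v$; I must check this filler path together with $H_e$ still yields three internally disjoint paths between $v$ and its $H_e$-partner, and that $v$ acquires degree $\ge 3$. Similarly, bridges of $G'$ need care because the two "sides" of the bridge lie on the same face; the tripler-graph already provides the needed redundancy there, but the argument that no $2$-cut separates the bridge's endpoints must be spelled out. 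Once these boundary cases are dispatched, the generic case follows routinely from the internal $3$-connectivity of tripler-graphs and the facial cycles created by the filler paths, so I would organize the write-up to handle the generic argument first and then enumerate the degenerate vertex/edge types.
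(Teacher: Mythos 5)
Your plan assembles the right raw ingredients (the three internally disjoint paths inside each tripler-graph, and the facial cycles formed by filler paths), but it never actually closes the case that matters, and your proposed ``cleanest route'' rests on a false premise. The faces of $G''$ are \emph{not} all (subdivided) triangles: for each face $f$ of $G'$ of degree $k$, the filler paths together with the $f$-side paths of the tripler-graphs bound a single large face with $2k$ sides (you even describe it as ``a single long cycle'' earlier in your own write-up, which contradicts the later claim that every face has three sides). Consequently the Steinitz-type shortcut does not apply; moreover, the criterion ``every face is a cycle and any two faces share at most one edge $\Rightarrow$ 3-connected'' is not valid in general (a cutting pair $\{v,w\}$ with $vw\notin E$ and several 2-connected bridges can satisfy it). Your fallback, case (2), is exactly where the work lies and is left as hand-waving: removing two vertices \emph{can} disconnect a cycle, so saying the facial cycles ``cannot be disconnected'' begs the question; what must be shown is that the large non-triangular faces cannot host a separating pair.

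For comparison, the paper's proof contracts the filler paths into single edges to get a graph $G_c$, re-describes $G_c$ as: subdivide every edge of $G'$, insert a cycle $C_f$ of length $2\deg(f)$ into each face, join it to the face boundary, and add chords so that \emph{every face except the $C_f$'s is a triangle}. It then makes two structural observations --- every vertex lies on at most one non-triangular face, and every edge has at most one endpoint on a non-triangular face --- and derives a contradiction: a cutting pair $\{v,w\}$ lies on at least two common faces, at least one of which is a triangle, forcing $(v,w)\in E$; but then \emph{all} faces containing both $v$ and $w$ are triangles, which is impossible for a cutting pair in a simple plane graph. If you want to salvage your local case analysis, you would need an explicit argument for a separating pair with both vertices on a single face-cycle $C_f$ (and for the boundary cases you flag --- leaves, bridges, repeated vertex--face incidences --- which you correctly identify as delicate but never resolve). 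As it stands, the proposal is an outline with its central step missing.
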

\begin{proof}  Let $G_c$ be the graph obtained from $G''$ by contracting
filler-paths into edges; we claim that $G_c$ is 3-connected.
We can view $G_c$ as having been built as follows:  Start with graph $G'$
and subdivide every edge.  For every face $f$ of degree $k$ of $G'$,
insert a cycle $C_f$ of length $2k$ inside $f$, and connect the vertices
of $C_f$ to their corresponding vertices on $f$.  
Add a few more edges connecting $C_f$ to $f$ such that all faces except
$C_f$ become triangles.  In particular, all faces of $G_c$ are simple
cycles, which immediately shows that $G_c$ is 2-connected.  Also note
that every vertex is incident to at most one non-triangular face, and
for every edge at most one endpoint is incident to a non-triangular face.
Now assume we have a cutting pair $\{v,w\}$, which means that at least
two faces contain both $v$ and $w$.  At least one of these faces must be
a triangle, which means that $(v,w)$ is an edge.  But then {\em all}
faces incident to both $v$ and $w$ must be triangles, by the above
condition on edges.  This is impossible if $\{v,w\}$ is a cutting pair
in a simple planar graph.
\end{proof}
\fi

Recall that we had some restrictions on drawings of $G'$, such as
being 2-proper and monotonicity-preserving.  All of them can be expressed 
as a {\em subgraph-restriction},
where we are given a (connected, constant-sized) subgraph $H$ of $G'$
and restrict the indices of layers used by $V(H)$.
\iffull
For example if $H$ is a single
vertex, then we can force its layer to be among a set of layers of our choice.  
If it is a single edge then we can force its span to be among a set of spans
of our choice.  If $H =v$-$c$-$w$ for some contracted vertex $c$ and edge
$(v,w)$ in $G$, then we can force $c$ to be within the range of the 
layers of $v,w$, hence $(v,c)\cup (c,w)$ is drawn $y$-monotonically. 
So this covers all the restrictions we had on $G'$.
We will discuss below how to test (under some assumptions) the
existence of a subgraph-restricted \ld{h} using \DP.

So assume graph $G'$ comes with some subgraph-restrictions $H_1,\dots,H_d$. 
For $j=1,\dots,d$, 
translate restriction $H_j$ to $G''$ by letting $H_j'$ be
graph $H_j$ with edges replaced by tripler-subgraphs, and layer-restrictions
replaced according to $i\leftrightarrow 3i{-}2$.     We impose further
subgraph-restrictions on $G''$:
(1) Every vertex of $G'$ of must be on a layer whose index is $2 \bmod 3$,
and (2) any tripler-graph $H$ must be drawn such that the {\em middle path}
(the path between vertices of $G'$ that uses no edges from the outer-face)
is drawn $y$-monotonically.
\else
Such restrictions can naturally be translated to $G''$, since layer-indices
relate via $i\leftrightarrow 3i{-}2$ in drawings of $G'$ and $G''$.
We add as further restrictions to $G''$ that vertices of $G'$ can only
be on every third layer and the length-2 paths that replace edges of
$G'$ must be drawn $y$-monotonically.  One can then easily argue (see the appendix):
\fi

\iffull
\else
\addtocounter{lemma}{1}
\fi

\begin{lemma}
\label{lem:fill_angles}
Let $G'$ be a plane graph.  Let $G''$ be a graph obtained by filling the
angles of $G'$.
Then $G'$ has a plane subgraph-restricted \ld{h} if and only if $G''$ has a
plane subgraph-restricted \ld{(3h)}. 
\end{lemma}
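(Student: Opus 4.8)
The plan is to prove each direction by an explicit transformation of layered drawings, using the layer-index correspondence $i \leftrightarrow 3i-2$ (so layers $1,2,\dots,h$ of $G'$ correspond to layers $1,4,7,\dots,3h-2$ of $G''$, and the layers with index $\not\equiv 2 \pmod 3$ are "new" layers created for the tripler-graphs and filler paths). For the forward direction, I start from a plane subgraph-restricted \ld{h} $\Gamma'$ of $G'$. I triple the vertical resolution: every vertex of $G'$ on layer $L_i$ is moved to layer $L_{3i-2}$ of $G''$, keeping its $x$-coordinate and the left-to-right order within the layer. Every edge $e=(u,w)$ of $G'$ is replaced by its tripler-graph $H$; since $u$ and $w$ are one or zero layers apart in $\Gamma'$ (by the 2-proper/2-short restriction, or at most $2h-1$ apart in the unconstrained case), they are now $3$, $0$, or more layers apart in $G''$, and there is always room to route the three subdivided copies of $e$ and add the triangulating edges of $H$ within the strip between $L_{3i-2}$ and $L_{3w}$, placing the subdivision vertices on the intermediate new layers; the middle path of $H$ can be drawn $y$-monotonically so restriction (2) is met. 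Finally, each filler path of length $2$ sits inside a face $f$ of $G'$: its endpoints are subdivision vertices $u', w'$ of adjacent tripler-graphs on $f$, and its single interior vertex can be placed on a new layer strictly between the layers of $u'$ and $w'$ (or, if $u'$ and $w'$ end up on the same layer, on an adjacent new layer), drawn inside the region of $f$. This gives a plane \ld{3h} of $G''$; the translated restrictions $H_j'$ are satisfied because $\Gamma'$ satisfied $H_j$ and we only rescaled, and restriction (1) holds by construction.

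For the backward direction, I start from a plane subgraph-restricted \ld{3h} $\Gamma''$ of $G''$ (with the extra restrictions (1) and (2) imposed). Since $G''$ is a subdivision of a $3$-connected planar graph (by Lemma~\ref{lem:3connected}, or the claimed fact in the non-full version), its planar embedding is unique up to reflection and choice of outer face; in particular the rotation system of $\Gamma''$ agrees with the one fixed for $G''$, hence restricts to the fixed embedding of $G'$ after contracting tripler-graphs and deleting filler paths. Delete all filler paths and all edges of the tripler-graphs except one of the three subdivided copies of each original edge $e$; contract that copy to recover the edge $e$ of $G'$, and place each vertex $v$ of $G'$ on layer $L_i$ where $3i-2$ is the (forced $\equiv 2 \pmod 3$) layer index of $v$ in $\Gamma''$. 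Restriction (1) guarantees this $i$ is an integer. Because the chosen copy of $e$ was a $y$-monotone path between consecutive "$\equiv 2$" layers (for 2-proper drawings; the middle-path monotonicity restriction (2) is what forces its span to be exactly $3$ rather than anything larger, translating back to span exactly $1$ in $G'$ — and analogously span $0$ or $3$ for 2-short, or any odd multiple of $3$ for 2-unconstrained), the recovered edge $(v,w)$ in $G'$ has the correct span. The embedding is preserved because contraction of tripler-graphs does not change the rotation at vertices of $G'$. Hence $\Gamma'$ is a plane subgraph-restricted \ld{h} of $G'$.

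The main obstacle — and the step deserving the most care — is the backward direction's claim that the span of each original edge is controlled correctly: a priori, in $\Gamma''$ the three subdivided copies of $e$ together with the triangulating edges of the tripler-graph might be stretched across many layers, and without restriction (2) (the middle-path $y$-monotonicity) nothing would force the endpoints of $e$ onto consecutive "$\equiv 2 \pmod 3$" layers. I would argue that the middle path of the tripler-graph has no vertices on "$\equiv 2$" layers other than its two endpoints (it is internally disjoint from $V(G')$), that it is drawn $y$-monotonically by restriction (2), and that its length forces the span; combined with restriction (1) pinning $V(G')$ to layers $\equiv 2 \pmod 3$, this yields that consecutive such layers of $G''$ are exactly $3$ apart, so the edge span in $G'$ is as required. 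The forward direction is routine but one must check that the tripler-graph — which must be an inner triangulation for $3$-connectivity — can genuinely be realized as a \ld{} within a width-$3$ (or width-$3\cdot\text{span}$) horizontal strip without introducing crossings; this follows because a triangulated "ladder" between two points on a vertical line admits an obvious layered drawing, and the subdivision vertices give enough freedom to place everything on the new layers.
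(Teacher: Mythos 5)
Your proposal is correct and follows essentially the same route as the paper: forward by tripling the layers, inserting the gadget drawings locally, and observing that the two tripler-edges at each angle are proper so the length-2 filler path needs at most one bend; backward by deleting the filler paths and all of each tripler-graph except its middle path, with restriction (1) pinning $V(G')$ to every third layer and restriction (2) making the recovered edges $y$-monotone. The only cosmetic difference is that you phrase the backward direction as contracting a chosen copy of each edge rather than retaining the designated middle path, and you slightly overstate the role of restriction (2) (the endpoint spans are already controlled by the translated subgraph-restrictions; monotonicity is only needed so the surviving path is a legal $y$-monotone edge-curve), but neither affects correctness.
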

\iffull
\begin{proof}
Assume first that $G''$ has a plane subgraph-restricted \ld{(3h)} $\Gamma''$.  
The vertices of $G'$ occur only every third layer, and
the middle path of each tripler-graph is drawn $y$-monotonically.  Hence
after deleting filler-paths and tripler-graphs except for the middle paths,
we obtain a drawing of $G'$ on $h$ layers, with edges $y$-monotone since 
middle paths are $y$-monotone.  
The subgraph-restrictions of $G'$ are satisfied since
they were translated suitably into $G''$.

Now assume that $G'$ has a plane subgraph-restricted
\ld{h} $\Gamma'$, and insert a
dummy-layer before and after any layer of $\Gamma'$ to obtain $3h$ layers.
Insert tripler-graphs in place of their corresponding
edges using the appropriate drawing from Fig.~\ref{fig:filler-paths}.
Clearly all subgraph-restrictions are satisfied.
It remains to argue how to place filler-paths.  
Consider a face $f$ of $G'$ containing a path
$u$-$v$-$w$ in clockwise order; we filled the angle $v,f$ with filler-path
$u'$-$s$-$w'$ where $s$ is a degree 2 vertex.
Observe that $(v,u')$ and $(v,w')$ are
drawn proper, regardless of the chosen
drawing of the tripler-graphs.  
This puts $u'$ and $w'$ either on the same layer or two layers apart.
Walking from $u'$ to $w'$ along face $f$ hence requires at most one bend,
so we can draw the filler-path (with $s$ at the bend) such that all edges
are $y$-monotone.
See Fig.~\ref{fig:filler-paths}(c). 
\end{proof}
\fi

\iffull
\subsection{Enforcing the outer-face}
\label{sec:enforcing-outerface}

We do one more modification to enforce the outer-face.   Let $G''$
be a graph that is a subdivision of a 3-connected planar graph $G_c$;
we assume throughout that $G''$ is not a simple cycle since no simple
cycle would arise from the prior modifications.
The operation of {\em adding escape-paths} assumes that
we are given one face $f$ of $G''$ (the desired outer-face) and consists of
the following.  Add a new vertex $r$ inside $f$.  Pick three vertices
$z_1,z_2,z_3$ on face $f$ that were also vertices in the 3-connected graph
$G_c$; in particular $f$ is the {\em only} face that contains all three of
them.  Add three paths of length $n=|(V(G'')|$ 
that connect $z_1,z_2,z_3$ to $r$; we call these the
{\em escape-paths}.  

Graph $G''$ may have subgraph-restrictions, which we translate
to the resulting graph $G'''$ by changing layer-restrictions
according to $i\leftrightarrow 4i{-}2$.    We impose further subgraph-restrictions
on $G'''$:  Vertex $r$ is on the bottommost layer,
and any vertex of $G''$ is on $L_{4i-2}$ for some $i\geq 1$.

\else
\addtocounter{lemma}{1}
\fi
\iffull
\begin{lemma}
\label{lem:cage}
Let $G''$ be a planar graph that is a subdivision of a 3-connected graph,
embedded with face $f$ as the outer-face.
Let $G'''$ be
the graph obtained by adding escape-paths to $G''$.  Then 
$G''$ has a plane subgraph-restricted \ld{h} 
if and only if $G'''$ has a planar subgraph-restricted \ld{(4h{+}1)} .
\end{lemma}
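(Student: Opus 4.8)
The plan is to prove the two implications of the ``if and only if'' separately. The forward implication is a routine ``stretch the layers and hang the escape-paths into the outer face'' argument; the backward implication carries the real content, namely that the escape-paths, together with the layer restrictions, force the induced drawing of $G''$ to realise its (unique) planar embedding \emph{and} to have $f$ as the outer face. (As a preliminary remark: contracting the escape-paths turns $G'''$ into $G_c$ with an added degree-$3$ vertex $r$ joined to $z_1,z_2,z_3$ inside the face bounded by $f$; this graph is $3$-connected, so $G'''$ is again a subdivision of a $3$-connected planar graph — but below only the uniqueness of the embedding of $G''$ will be used.)

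For the forward direction, suppose $\Gamma''$ is a plane subgraph-restricted \ld{h} of $G''$ with $f$ as the outer face. First I would stretch $\Gamma''$: put the vertices of layer $L_i$ onto the new layer $L_{4i{-}2}$ of a drawing on $4h{+}1$ layers, and redraw every edge of $G''$ $y$-monotonically through the three inserted layers; all translated subgraph-restrictions (via $i\leftrightarrow 4i{-}2$) as well as the new restriction ``every vertex of $G''$ lies on some $L_{4i{-}2}$'' hold by construction. Then place $r$ on the bottommost layer $L_{4h{+}1}$. It remains to route the three escape-paths from $z_1,z_2,z_3$ to $r$. Since $f$ is the outer face its open region is an unbounded disk and each $z_j$ lies on its boundary, so there are three internally disjoint curves inside that region from the $z_j$ to $r$. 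As no subgraph-restriction touches escape-path vertices, their edges may be long or flat, so each such curve is realisable as a $y$-monotone polyline bending on layers; since each escape-path has $n=|V(G'')|$ edges — far more than the $O(1)$ $y$-monotone edges needed to leave the silhouette of $G''$ near $z_j$ and then descend along its outside to $r$, with any surplus absorbed by zig-zags within two adjacent inserted layers — the routing always succeeds. This yields a planar subgraph-restricted \ld{(4h{+}1)} of $G'''$.

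For the backward direction, let $\Gamma'''$ be a planar subgraph-restricted \ld{(4h{+}1)} of $G'''$. The restriction forces every vertex of $G''$ onto some layer $L_{4i{-}2}$; since $1\le 4i{-}2\le 4h{+}1$ gives $1\le i\le h$, all of $G''$ occupies the layers $L_2,L_6,\dots,L_{4h{-}2}$. Delete $r$ together with the interiors of the three escape-paths; this leaves a planar drawing $\Delta$ of $G''$. Because $G''$ is a subdivision of a $3$-connected planar graph, $\Delta$ realises the unique planar embedding of $G''$, so only the outer face is in question. Let $f^{\ast}$ be the face of $\Delta$ whose open region contained the deleted part. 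Each $z_j$ lies on the boundary of $f^{\ast}$ (an escape-path leaves it into that region), and by hypothesis $f$ is the \emph{only} face of $G''$ incident to all of $z_1,z_2,z_3$, so $f^{\ast}=f$. Finally $f^{\ast}$ is the \emph{outer} face: the deleted region contains $r$, which lies on $L_{4h{+}1}$, strictly below every layer used by $G''$, whereas every vertex and edge of $G''$ stays in the closed horizontal strip between $L_2$ and $L_{4h{-}2}$ (edges are $y$-monotone polylines with bends on layers, hence never leave the strip spanned by their two endpoints), so every bounded face of $\Delta$ lies in that strip and cannot contain $r$. Hence $f$ is the outer face of $\Delta$. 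Compressing layers via $L_{4i{-}2}\mapsto L_i$ for $1\le i\le h$, replacing each $G''$-edge inside a gap between consecutive surviving layers by a $y$-monotone polyline that bends only on those two boundary layers (possible without creating crossings, since within such a gap each edge is a single $y$-monotone arc and the left-to-right order of these arcs agrees at the gap's two boundary layers), and undoing the translation $i\leftrightarrow 4i{-}2$ of the subgraph-restrictions, yields a plane subgraph-restricted \ld{h} of $G''$ with $f$ as the outer face.

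The main obstacle, as indicated, is the outer-face step of the backward direction: here the \emph{combinatorial} role of $z_1,z_2,z_3$ (they lie on a common face of $G''$ only if it is $f$ — which is why they must be vertices of the underlying $3$-connected graph) is coupled with the \emph{geometric} role of $r$ (pinned to the bottommost layer and attached to $G''$ only through the escape-paths, hence unable to sit inside any bounded face). Secondary care is needed to verify that \emph{all} of the inherited subgraph-restrictions — the $2$-proper, $2$-short, $2$-unconstrained and monotonicity-preserving conditions from the earlier reductions — survive the layer map $i\leftrightarrow 4i{-}2$ in both directions, and to confirm that the escape-path length $n=|V(G'')|$ is always adequate: it is large enough to trace a routing that winds around $G''$ (which has fewer than $n$ vertices), and never too large, since a path of any length can be realised within two adjacent layers so surplus length is harmless.
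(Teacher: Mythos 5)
Your overall architecture matches the paper's proof: forward direction by stretching layers (occupied layers at $L_{4i-2}$), placing $r$ on the bottommost layer, and routing the escape-paths inside the region of $f$; backward direction by using the bottommost-layer restriction to place $r$ in the unbounded region, the adjacency of $r$ to $z_1,z_2,z_3$ to force all three onto the outer face, the choice of $z_1,z_2,z_3$ as branch vertices of the $3$-connected graph to conclude that the outer face is $f$, uniqueness of the rotation system, and finally layer compression. The backward direction is fine and, if anything, slightly more carefully argued than the paper's.

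The gap is in the forward direction, in the step where you route the escape-paths. You justify feasibility by claiming that only ``$O(1)$ $y$-monotone edges are needed to leave the silhouette of $G''$ near $z_j$ and then descend along its outside to $r$.'' This is false in general: the region of the outer face $f$ can form a narrow corridor that zigzags between high and low layers many times (think of two nested zigzag paths), so that \emph{every} curve from $z_j$ to a point below the drawing that stays inside the region of $f$ has $\Theta(n)$ local extrema in $y$. Since an edge of a layered drawing is a $y$-monotone polyline, each such local extremum consumes one interior vertex of the escape-path, so the number of extrema is exactly what must be bounded by the path length $n$, and ``surplus absorbed by zig-zags'' does not rescue a routing that needs \emph{more} direction changes than the path has vertices. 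The paper closes this by taking a Euclidean shortest path $\pi_j$ from $z_j$ to $r$ inside $f$ and observing that $\pi_j$ can turn only at vertices on the boundary of $f$, so it has at most $\deg(f)\le n-1$ local minima/maxima, which the $n$ interior vertices of the escape-path can accommodate (with the three empty layers between consecutive occupied layers providing room for all three paths to bend near a shared extremum). Your closing remark that the path ``winds around $G''$, which has fewer than $n$ vertices'' gestures at exactly this count, but it is asserted rather than derived and contradicts your earlier $O(1)$ claim; as written, the feasibility of the routing — the one quantitatively delicate point of the lemma, and the reason the escape-paths are given length $n$ rather than length $1$ — is not established.
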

\begin{proof}
If $G'''$ has a planar \ld{(4h{+}1)} $\Gamma'''$ that satisfies the restrictions, then
$r$ is on the bottommost layer, hence on the outer-face of $\Gamma'''$.   Remove $r$ and the 
escape-paths to get the induced drawing $\Gamma''$ of $G''$; this must have
$z_1,z_2,z_3$ on the outer-face since they are adjacent (via the escape-paths)
to $r$.  So the outer-face of $\Gamma''$ must be $f$.  The rotation scheme
of $G''$ is automatically respected since it is unique.
Finally vertices of $G''$ only on every fourth layer,
so by deleting
all other layers we get a plane \ld{h} of $G''$.
This satisfies the restrictions on $G''$ since they were inherited into $G'''$.

Vice versa, if $G''$ has a plane \ld{h} $\Gamma''$,
then insert three layers between any
two layers of $\Gamma''$, and also three layers above and one layer below 
$\Gamma''$.
Place $r$ in the topmost layer.  Clearly all subgraph-restrictions
are satisfied, except that we need to explain how to route the escape-paths.

Vertices $z_1,z_2,z_3$ are on the 
outer-face $f$ of $\Gamma''$, which also contains $r$.
Find, for $i=1,2,3$, a Euclidean shortest path
$\pi_i$ from $z_i$ to $r$ inside $f$.
(These three paths may overlap each other, but they do not cross.)
Now place the escape-paths by tracing near $\pi_i$, but using the 
nearest available layer inside face $f$ instead.   This is feasible, even
at a local minimum or maximum of $\pi_i$, since 
only every fourth layer of $\Gamma'''$ contains vertices of $G''$, and only those
vertices can be local minima/maxima.     
Therefore,
even if all three paths $\pi_1,\pi_2,\pi_3$ go through one local 
minimum/maximum, we can still use the three layers below/above it to 
place bends for the escape-paths.  See Fig.~\ref{fig:cage}.  These
layers have not been used for other bends of escape-paths already since
$\pi_i$ was a Euclidean shortest path.  

At each local minimum or maximum
the drawing $\Gamma_i$ of the escape-path to $z_i$ 
must use a vertex of degree 2 to ensure that edges are
drawn $y$-monotonically. There are at most $\deg(f)\leq n-1$ such
vertices, so there are sufficiently many degree-2 vertices in the
escape-paths.   If we did not use them all, then artificially add
more vertices at bends or insert flat edges to use them up.
See Fig.~\ref{fig:cage}.
Thus we can insert the escape-paths into the
drawing and obtain the desired planar proper \ld{(4h{+}3)} of $G'''$.
\end{proof}
\else
As shown in the appendix (see also Fig.~\ref{fig:cage}), 
we can enforce that the drawing respects a given outer-face $f$ by
inserting a new vertex $r$ and
adding three {\em escape-paths} from $r$ to three vertices on the
face $f$.  The resulting graph $G'''$
then can be drawn using roughly 4 times as many layers as $G''$, and the
relationship goes both ways if we restrict vertices of $G''$ to use only
every fourth layer and $r$ to be on the bottommost layer.

\fi

\begin{figure}[tb]
\hspace*{\fill}
\includegraphics[width=0.3\linewidth,page=3]{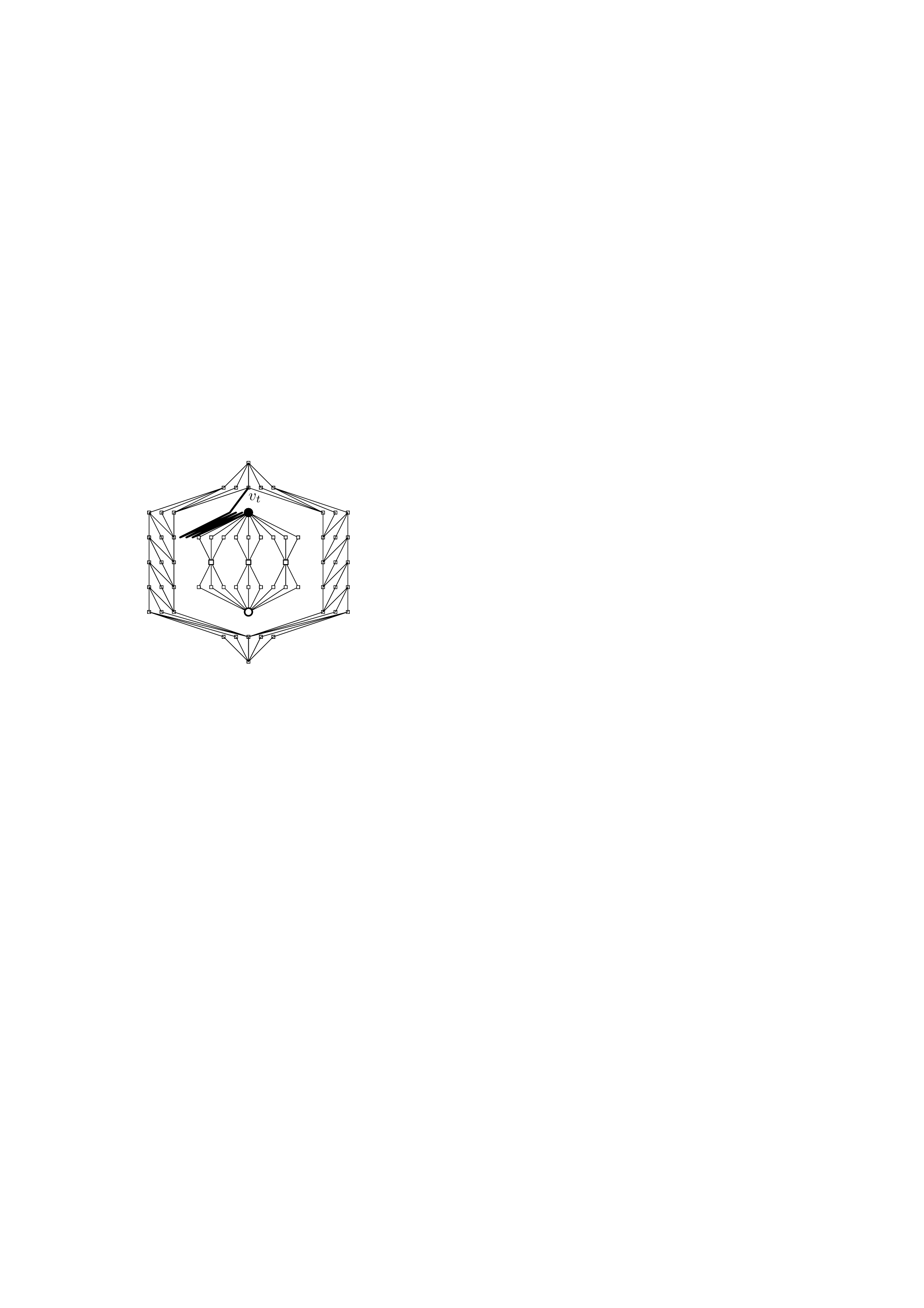}
\hspace*{\fill}
\caption{Routing the escape-paths (thick solid) along the outer-face.
\iffull For illustration purposes we chose paths that are longer than needed.\fi
}
\label{fig:cage}
\end{figure}

\subsection{Putting it all together}

\begin{theorem}
There are $O(f(h) poly(n))$ time algorithms of to test
the following:
\begin{itemize}
\item Given a fan-plane graph $G$, does it have a fan-plane proper \ld{h}?
\item Given a fan-plane graph $G$, does it have a fan-plane short \ld{h}?
\item Given a 1-plane graph $G$, does it have a 1-plane unconstrained \ld{h}?
\end{itemize}
\end{theorem}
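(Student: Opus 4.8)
The plan is to compose the three reductions from Sections~\ref{sec:removing-crossings}--\ref{sec:enforcing-outerface} and then invoke \DP. Consider first a fan-plane graph $G$ and proper drawings. We begin by testing, for every crossing-patch $\calC$ of $G$, whether all vertices of $V_\calC$ lie on the infinite region of the embedding of $G_\calC$ induced by that of $G$; this depends only on the embedding and by Lemma~\ref{lem:proper_crossings} is necessary, so if it fails we answer ``no''. Otherwise we contract every crossing-patch to obtain a plane graph $G'$. We then express ``$2$-proper'' and ``monotonicity-preserving'' as subgraph-restrictions in the sense of Section~\ref{sec:enforcing-rotation}: each edge of $G'$ gets an appropriate restriction on its span, and each short path through a contracted vertex $c$ gets a restriction forcing $c$ onto a layer between those of its two $G$-neighbours, so that the corresponding former edge of $G$ is drawn $y$-monotonically. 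By Lemmas~\ref{lem:replace_crossings_forward} and~\ref{lem:replace_crossings_backward}, $G$ has a fan-plane proper \ld{h} if and only if $G'$ has a plane subgraph-restricted \ld{(2h-1)}.

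Next we fill the angles of $G'$ to obtain $G''$, translating the restrictions via $i \leftrightarrow 3i-2$ and adding the two further restrictions described before Lemma~\ref{lem:fill_angles}; by that lemma $G'$ has a plane subgraph-restricted \ld{(2h-1)} if and only if $G''$ has one on $3(2h-1)=6h-3$ layers. Then we add escape-paths to $G''$, taking the distinguished face to be the outer-face of $G$ carried through the previous two constructions; by Lemma~\ref{lem:cage}, $G''$ has a plane subgraph-restricted \ld{(6h-3)} if and only if $G'''$ has a planar subgraph-restricted \ld{(24h-11)} (using $4(6h-3)+1 = 24h-11$ layers). Since $G'''$ is a subdivision of a $3$-connected planar graph with its outer-face pinned down by the added vertex $r$, ``plane'' and ``planar'' drawings of $G'''$ coincide, which is precisely why it is harmless that \DP is free to choose the planar embedding. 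Finally we run \DP on $G'''$: as recalled in Section~\ref{sec:enforcing-rotation}, \DP first builds a path decomposition of width bounded in the number of layers and then does dynamic programming over its bags, with table entries recording the layer of each bag-vertex, so a constant number of constant-sized subgraph-restrictions can be checked locally within the transition relation. As $G'''$ has $\mathrm{poly}(n)$ vertices and $24h-11$ depends only on $h$, this runs in $O(f(24h-11)\,\mathrm{poly}(n)) = O(f'(h)\,\mathrm{poly}(n))$ time, and it is constructive because every reduction step is.

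The short and $1$-plane-unconstrained cases are identical except for the first reduction. For short drawings we replace Lemmas~\ref{lem:replace_crossings_forward} and~\ref{lem:replace_crossings_backward} by Lemma~\ref{lem:replace_crossings_short}, encoding ``$2$-short'' by allowing each edge of $G'$ to have span $0$ or $2$ together with the same monotonicity restriction. For $1$-plane unconstrained drawings we use Lemma~\ref{lem:1planePlanarization}, encoding ``$2$-unconstrained'' by restricting every vertex of $G$ to an odd-indexed layer. Everything downstream of $G'$ is unchanged, so the same layer-count bookkeeping and the same call to \DP apply in all three cases.

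The main obstacle is making the claim about \DP rigorous, i.e.\ that it genuinely accommodates subgraph-restrictions. The one nontrivial point is that \DP's own path decomposition need not contain each restricting subgraph inside a single bag; but each restricting subgraph $H_j$ is connected and of bounded size, so we may refine the decomposition by inserting the $O(1)$ vertices of $H_j$ into a contiguous block of bags, which raises the width only by $O(1)$ and hence keeps the algorithm fixed-parameter tractable. Once every restricting subgraph appears in some bag, testing the restriction becomes a local check in the dynamic program, and the theorem follows by composing the three reductions and tracking the resulting number of layers.
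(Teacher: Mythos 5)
Your proposal is correct and follows essentially the same route as the paper: test the necessary condition of Lemma~\ref{lem:proper_crossings}, contract crossing-patches, fill angles, add escape-paths, track the layer count to $24h-11$, and run \DP with the subgraph-restrictions enforced by augmenting bags of the path decomposition. The only slightly loose point is your claim that adding the vertices of each restricting subgraph raises the width by $O(1)$; since many restrictions can touch the same bag, the paper instead argues that each bag induces a planar graph on $O(h)$ vertices, hence contains $O(h)$ edges, each in a constant number of constant-sized restrictions, so the augmented bags still have size $O(h)$ — but this is a refinement of the same idea, not a different argument.
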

\begin{proof}
First test whether the conclusion of Lemma~\ref{lem:proper_crossings}
is satisfied for all crossing-patches (this is trivially true for 
1-planar graphs).  If not, abort.  Otherwise
contract the crossing-patches of $G$ to obtain $G'$, and add the
subgraph-restrictions that $G'$ must be drawn monotonicity-preserving
and 2-proper/2-short/2-un\-con\-strained.  Fill the angles 
of $G'$ to obtain $G''$, and add escape paths to obtain $G'''$.
Inherit the above subgraph-restrictions into $G''$ and $G'''$.  
Also add the restrictions discussed when building $G''$ and $G'''$.
We have argued that $G'''$ contains a planar subgraph-restricted
\ld{(24h{-}11)} if and only if $G$ has the desired embedding-preserving \ld{h}.  

We can test for the existence of a planar \ld{(24h{-}11)} of $G'''$ using
\DP, the dynamic programming algorithm from \cite{DFK+08}.  (As this algorithm
is quite complicated, we will treat it as a black box and not review it here.)
As we argue now, in the same time we can also ensure the  created
subgraph-restrictions $H_1,\dots,H_d$.  Observe that every edge of $G'''$
belongs to a constant number of subgraph-restrictions,  and that each
$H_j$ has constant size.  Let $\calP$ be a path decomposition
of $G'''$ of width at most $24h$ (this must exist, otherwise $G'''$ has no 
\ld{(24h{-}11)}). $\calP$ is found as part of \DP.    Modify $\calP$ 
as follows:  For each $H_j$ that is not a single vertex,
and every bag $P$ that contains at least one edge of $H_j$,
add {\em all} vertices of $H_j$ to $P$.  The result $\calP'$
is a path decomposition since $H_j$ is connected.
Since bag $P$ represents $O(h)$ edges (it
induces a planar graph), and 
edges belong to constant number of restriction subgraphs of constant size, 
the bags
of $\calP'$ have size $O(h)$.   Call \DP on
$G'''$ using this path decomposition $\calP'$.  Since each table-entry
of the dynamic program specifies the layer-assignment, and since each
restriction subgraph $H_j$ appears in at least one bag $P$ of $\calP'$, 
we can enforce the subgraph-restriction by permitting (among the table-entries
indexed by bag $P$) only those that satisfy the restriction on $H_j$.  
\end{proof}

Sadly, our results are mostly of theoretical
interest.  Algorithm \DP is FPT in $h$, but the dependency $f(h)$
on $h$ is a very large function.
Our algorithm (where $h$ gets replaced by $24h$ and then increased
by another constant factor to accommodate the subgraph-restrictions) makes this
even larger.

\section{Testing Algorithm for $2$-Layer Fan-planarity}
\label{sec:trees}


Finally we turn to fan-planar drawings when the embedding is not fixed. 
\iffull
We have results here only for 2 layers (which are surprisingly complicated
already).
Graphs with maximal \propertwo drawings have been studied earlier by Binucci
et al.~\cite{DBLP:journals/jgaa/BinucciCDGKKMT17}. 
They characterized these graphs as subgraphs of a \stego
(illustrated in Fig.~\ref{fig:stegoEx}; we review its definition now).

A \emph{ladder} is a bipartite outer-planar graph consisting of two paths of the same length $\langle u_1, u_2, \dots, u_{\frac{n}{2}} \rangle$ and $\langle v_1, v_2, \dots, v_{\frac{n}{2}} \rangle$, called \emph{upper} and \emph{lower} paths, plus the edges $(u_i,v_i)$ $(i = 1,2, \dots, \frac{n}{2})$; the edges $(u_1,v_1)$ and $(u_{\frac{n}{2}},v_{\frac{n}{2}})$ are called the \emph{extremal edges} of the ladder. A \emph{snake} is a planar graph obtained from an outer-plane ladder, by adding, inside each internal face, an arbitrary number (possibly none) of paths of length two connecting a pair of non-adjacent vertices of the face. In other words, a snake is obtained by merging edges of a sequence of several $K_{2,h}$ ($h\geq2$). We may denote the partite set with more than 2 vertices (if any) the \emph{large side} of a $K_{2,h}$. A vertex of a snake is \emph{mergeable} if it is an end-vertex of an extremal edge and belongs to the large side of an original $K_{2,h}$. 
Mergeable vertices are black in Fig.~\ref{fig:stegoEx}.
A \emph{\stego} is a graph obtained by iteratively merging two snakes at a distinct mergeable vertex, and by adding degree-1 neighbors (``{\em stumps}'')
to mergeable vertices.   We call a vertex of degree 2 in a \stego a {\em joint vertex}; these are on the large side of a $K_{2,h}$.
Note that each ladder vertex has either three or four ladder vertices as neighbors, except for the vertices at extremal edges. If a ladder vertex has four neighboring ladder vertices, then it is a cut-vertex in $G$.
Binucci et al.~\cite{DBLP:journals/jgaa/BinucciCDGKKMT17}
showed that a graph is \propertwo if and only if it is a subgraph of a \stego.
Recognizing snakes (which are exactly the biconnected \propertwo graphs) is polynomial
\cite{DBLP:journals/jgaa/BinucciCDGKKMT17}, but
the complexity of recognizing \propertwo graphs that are not biconnected is open. 

Now we show how to test whether a tree has a \propertwo.
\fi

\begin{theorem}
Let $T$ be a tree with $n$ vertices. We can test in $O(n)$ time whether $T$ admits a \propertwo drawing.
\end{theorem}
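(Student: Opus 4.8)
The plan is to exploit the structural characterization of \propertwo graphs: by Binucci et al., $T$ admits a \propertwo drawing if and only if it is a subgraph of a \stego. Since a tree is already acyclic, being a subgraph of a \stego means that $T$ can be embedded as a subforest of the ``skeleton'' of some \stego, i.e., its edges must be realizable by ladder-edges, rungs, the length-two merge paths, stumps, and the paths of length two added inside snake-faces. First I would reformulate the condition combinatorially: because the snakes are glued only at mergeable (degree-$\leq$something) vertices and the whole \stego has pathwidth $2$, a tree inside a \stego has a very restricted shape --- essentially a ``caterpillar-like'' backbone (coming from the ladder structure of one or more snakes glued in a path-like fashion) with limited branching. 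So the first real step is a lemma characterizing which trees occur as subgraphs of a \stego purely in terms of local degree and distance conditions along a central path.

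Second, I would turn this characterization into a linear-time test. The natural approach is a bottom-up dynamic program over $T$ rooted at an arbitrary vertex (or, better, over the block-cut structure / BFS layers), where for each vertex $v$ we compute a constant-size set of ``states'' describing how the subtree $T_v$ could sit inside a \stego: on which of the two layers the ``attachment point'' of $T_v$ must lie, whether $v$ plays the role of a ladder vertex, a joint (degree-2, large-side) vertex, a stump, or an interior vertex of a merge/$K_{2,h}$ path, and how much ``horizontal room'' on each side is still free. Combining the states of the children of $v$ then reduces to checking a bounded number of compatibility rules (at most one child may continue the ladder backbone in each direction, the others must be absorbable as $K_{2,h}$-paths or stumps, etc.). Since each vertex has a constant number of states and merging is done in time proportional to the degree, the whole DP runs in $O(n)$.

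The authors warn --- and I agree --- that the honest obstacle is not the DP machinery but the case analysis underlying the structural lemma: a \stego is assembled from possibly many snakes merged at distinct mergeable vertices, each snake is itself a sequence of merged $K_{2,h}$'s with arbitrarily many internal length-two paths, and stumps can hang off mergeable vertices. A spanning tree of such a graph can therefore use edges from several snakes, route through merge vertices, and pick up stump-edges, so enumerating exactly which tree-shapes are feasible (and proving both that feasible ones embed and that infeasible ones cannot) is delicate. In particular one must be careful that the two layers (bipartition) are respected: every tree-edge must connect the two layers, so $T$ must be bipartite with a fixed $2$-coloring up to global swap, and the DP states must track the layer parity. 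Getting this invariant together with the ``backbone plus bounded branching'' shape exactly right is where most of the work lies.

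Concretely, the steps I would carry out are: (1) recall the \stego characterization and reduce ``$T$ is \propertwo'' to ``$T$ embeds as a subgraph of some \stego''; (2) prove the structural lemma describing such trees (bipartite, with a distinguished backbone path realizing the ladder/merge structure and all side-branches being short paths or stumps in prescribed positions); (3) design the constant-state DP over a rooted $T$ implementing this lemma, specifying the state set and the $O(\deg v)$ merge rules at each vertex; (4) argue correctness of the DP against the lemma and bound its running time by $O(n)$; and (5) note that the DP is constructive, so a positive answer yields an explicit \propertwo drawing by reading off the \stego into which $T$ embeds and then using the drawing guaranteed by Binucci et al. I expect step (2), the structural lemma and its proof, to be the main bottleneck, with step (3)'s bookkeeping a close second.
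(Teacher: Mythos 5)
Your starting point matches the paper's: reduce to ``is $T$ a subgraph of a \stego'' via the characterization of Binucci et al., derive structural restrictions, and then scan the tree in linear time. But the execution you sketch diverges from what actually makes the argument work, and it has two concrete gaps. First, the paper does \emph{not} run a bottom-up DP on $T$ rooted at an arbitrary vertex. It first strips all leaves of $T$ to obtain $T'$ (leaves can only be stumps or joint vertices), proves that $T'$ has maximum degree $4$ and that each ladder vertex meets at most two joint vertices, and then anchors everything to the \emph{longest path} $\Pi$ of $T'$: the key structural fact is that the ladder backbone of any host \stego must essentially coincide with $\Pi$, because a secondary subtree containing a high-degree vertex would force a path longer than $\Pi$. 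This is what bounds the side-subtrees to paths (or near-paths) and reduces the problem to a single left-to-right greedy sweep along $\Pi$. Your rooted DP never identifies this forced backbone, so your claimed invariant ``at most one child continues the ladder in each direction'' has nothing pinning down \emph{which} child that is, and your structural lemma (which you correctly flag as the bottleneck) is left entirely unproved.

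Second, your claim of a ``constant-size set of states'' per vertex is not achievable as stated: the essential quantity one must carry along the scan is an integer \emph{offset} recording how far the opposite ladder path extends relative to the current position, and this offset can be $\Theta(n)$. Linearity is rescued not by a bounded state space but by a domination argument: among all feasible placements at a node, the one minimizing the offset (with a tie broken in favour of having a ``leafless'' vertex available on the opposite path) dominates all others, so a single state suffices. That second component of the state --- whether some vertex of $T'$ on the opposite ladder path has no leaves of $T$ attached --- is needed precisely for the final step you omit: reinserting the deleted leaves of $T$ as stumps or joints, which is only possible if each fully occupied ladder retains a free cell. Without the leaf-stripping/reinsertion mechanism and the offset-plus-flag bookkeeping, your DP cannot certify feasibility correctly, so the plan as written would not compile into a proof even if the case analysis were carried out.
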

\begin{proof}
\iffull
Suppose that $T$ is \propertwo and let $G$ be a \stego such that $T \subset G$. 
All stumps in $G$ are leaves in $T$.
Let $T'$ be the tree obtained from $T$ by removing all its leaves; then $T'$ contains no stumps.
We use the term {\em leafless} for a vertex of $T'$ that was not incident to any leaves of $T$.
We need some straightforward observations.

\begin{claim}
In $T'$, a ladder vertex $v$ of $G$ is adjacent to at most two joint vertices, and if it is incident
to two joint vertices $y,y'$ then they belong to distinct $K_{2,h}$. 
\end{claim}
\begin{proof}
If $v$ is incident to three joint vertices then two of them are in the same $K_{2,h}$, so we only need
to show the second claim.  Recall that $y,y'$ 
have degree 2 in $G$ and (since they are in $T'$) also have degree 2 in $T$.   
So the edges to their other common neighbour $z$ in the $K_{2,h}$ must also be in $T$, 
giving us a cycle $v_i$-$y$-$z$-$y'$ in $T$, an impossibility.  
\end{proof}

\begin{claim}
$T'$ contains no vertex with degree greater than four. 
\end{claim}
\begin{proof}
Assume for contradiction that (up to symmetry) $v_i$ has 5 neighbours in $T'$.  
We claim that nearly always the situation of the previous claim must happen and consider cases.  If $v_i$ was a mergeable  endvertex of a snake then 
it had at most 4 neighbours that are not stumps (hence potentially in $T'$).  If $v_i$ was an endvertex of a snake but not mergeable,
then it belongs to only one $K_{2,n}$ and has at most 2 neighbours of degree 3 or more, so the above situation applies.  So we are done
unless $v_i$ is in the middle of a snake, and its two incident $K_{2,h}$-subgraphs (which share edge $(v_i,u_i)$ in $G$) contain exactly one 
joint-vertex each while the other three neighbours of $v_i$ in $T'$ are $v_{i-1},u_i$ and $v_{i+1}$.  
Call the two joint-vertices $x_{i-1}$ and $x_{i+1}$ (connected to $u_{i-1}$ and $u_{i+1}$).  As before, edges $u_{i-1}$-$x_{i-1}$-$v_i$-$x_{i+1}$-$u_{i+1}$
all must exist in $T$.  But then $u_i$ can {\em only} be connected to $v_i$ in $T$, because its only other incident edge $(u_i,u_{i-1})$ and $(u_i,u_{i+1})$
would lead to a cycle in $T$.  Therefore $u_i$ has degree 1 in $T$ and is not in $T'$, a contradiction.
\end{proof}

\medskip

The idea now is to test whether $T'$ (and hence $T$) can be augmented to a \stego $G$ (without stumps). 
Let $\Pi=\langle u_1,u_2,\dots,u_l \rangle$ be the longest path of $T'$. 
The vertices of $\Pi$ with degree greater than two represent ladder vertices of $G$ whose subtrees must be ``paired'' with corresponding sub-paths in $\Pi$ in order to reconstruct $G$.

We assign a \emph{type} to each node $u_i$ of $\Pi$ as follows. 
If $u_i$ has degree four, it is of \emph{type A}.
If $u_i$, with $1 < i <l$, has degree two and is leafless, and both $u_{i-1}$ and $u_{i+1}$ have degree three, then $u_{i-1}$, $u_i$, and $u_{i+1}$ are of \emph{type B} and form a \emph{triple}.
If $u_i$ has degree three (and it is not of type B), it is of \emph{type C}.
If $u_i$ has degree two (and it is not of type B), it is of \emph{type D}.
Call a subtree of $T'$ \emph{primary} if it contains a vertex of $\Pi$, and {\em secondary} otherwise.

\begin{figure}[ht]
\centering
\includegraphics[width=0.6\linewidth,page=3]{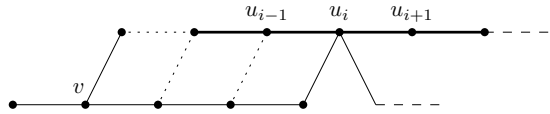}
\caption{Showing the longest path of $T'$ (bold) and the case in which a secondary subtree rooted at a type-A vertex $u_i$ contains a vertex of degree greater than two. The dotted edges are those missing to reconstruct a putative ladder containing the secondary subtree.}
\label{fig:type}
\end{figure}

\begin{claim}
If $u_i$ is of type $A$, then no secondary subtree rooted at $u_i$ contains a vertex of degree 3 in $T'$.
\end{claim}
\begin{proof}
If $u_i$ is of type A, there are two primary and two secondary subtrees rooted at $u_i$. Observe that $u_i$ is a cut-vertex vertex of $G$, that is, there are two snakes in $G$ that were merged at $u_i$. 
Then each secondary subtree of $u_i$ is a path that belongs to one of these two snakes. To see this, recall that $T'$ contains no stumps of $G$ and that each ladder vertex of $G$  is adjacent to at most one joint vertex in $T'$. Hence, a vertex $v$ of degree greater than two is adjacent to either three ladder vertices or two ladder vertices and one joint. In both cases, vertex $v$ cannot belong to a secondary subtree as it would contain a path longer than the longest path in one of the two primary subtrees, see Fig.~\ref{fig:type} for an illustration. 
\end{proof}

If $u_{i-1}$, $u_i$, and $u_{i+1}$ are a triple of type B, with a similar argument one can prove that the secondary subtree rooted at $u_{i-1}$ and that the secondary subtree at $u_{i+1}$ are both paths.

If $u_i$ is of type C, then its secondary subtree may contain at most one vertex $v$ of degree greater than two, namely, such a vertex (if any) has degree three and it is either directly adjacent to $u_i$ or there is one degree-2 node between $u_i$ and $v$. 
	Again, this is implied by the fact that a vertex of degree greater than two is a ladder vertex and that a secondary subtree cannot contain a sub-path whose length is longer than  the longest path in one of the two primary subtrees of $u_i$. 

If $u_i$ is of type D, there is no secondary subtree rooted at $u_i$.

The idea is now to use a greedy strategy along the nodes of $\Pi$ starting from $u_1$ in order to reconstruct a \stego $G$ containing $T'$. Recall that a \stego is composed of distinct snakes, each having an underlying ladder. Hence, we can view all upper (lower) paths of these ladders as a unique upper (lower) path that goes through extremal edges.
Without loss of generality, we assign $u_1$ to be the first ladder vertex of the upper path of $G$.
While we extend the path $\Pi$ from left to right on either the upper or lower path, we keep track of an integer offset variable that stores the minimum required number of ladder vertices on the respective \emph{other} ladder path. 
This offset can take positive and negative values, depending on whether the other path extends beyond the currently considered node $u_i$ of $\Pi$ or not.
For negative offsets the considered primary node extends further to the right than the last node on the other ladder path, while for positive offsets the primary path lags behind.
An offset of $0$ means that both ladder paths extend equally far.
Let $o_i$ be this offset value corresponding to node $u_i$.

Furthermore we need to store a flag $\beta_i$ with each offset $o_i$ that expresses whether ($\beta_i = 1$) or not ($\beta_i = 0$) at least one leafless node exists on the ladder path opposite of $\Pi$ in the currently extended snake. 
By default each flag is set to 0.
This information is required in some extremal cases for deciding if all leaves can be re-inserted in the end of the process or not.
In fact, the existence of a single leafless node on a ladder in which \emph{all} ladder vertices are contained in $T'$ guarantees that all other nodes of $T'$ on this ladder can have arbitrarily many leaves. 
However, the existence of a leafless node on the ladder path is only beneficial if the choice is between two options with identical offsets. Otherwise the smaller offset option will always allow for strictly more freedom when continuing to extend the primary path $\Pi$.

For each node $u_i$ of $\Pi$, we distinguish all possible cases based on its type. 
Node triples of type-B are considered together in one step. 
We assume that the primary subtree that contains $u_{i-1}$ 
has already been processed by the algorithm. 
Without loss of generality we can assume that $u_{i-1}$ is part of the upper path of $G$; otherwise we simply flip the roles of the upper and lower path.
While, depending on the type of $u_i$, there are many options of distributing the primary subtree containing $u_{i+1}$ and the secondary subtrees, we observe that 
(i) it is sufficient to consider the primary path going monotonically from left to right and 
(ii) if several options are valid it is sufficient to select the one yielding the smallest offset $o_i$, and in case of ties one with $\beta_i = 1$. This is because the smaller an offset and the more leafless nodes on the secondary path, the more freedom we maintain for placing future secondary paths. Additionally, in a \stego, any decision for a type-A, -B, or -C node $u_i$ only depends on the offset $o_{i-1}$ and the flag $\beta_{i-1}$ of its predecessor $u_{i-1}$ on $\Pi$ and the length of its secondary subtree(s). Hence we can greedily select among all feasible options the one producing smallest offset $o_i$ and as a tie-breaker a larger value $\beta_i$. For ease of presentation, we discuss the types in the order A,C,D, and finally B.

\textbf{Type A.} If $u_i$ is of type A, we have four options of embedding the primary subtree containing $u_{i+1}$ and the two secondary subtrees, see Fig.~\ref{fig:node-types-A}. Recall that $u_i$ is a cut node of $G$ and thus there are two snakes (and hence ladders) that meet at $u_i$. 
Let $s_1$ and $s_2$ be the two secondary subtrees, which are in fact paths. 
Let $|s_1|$ and $|s_2|$ denote their lengths and assume $|s_1| \le |s_2|$. 
If $o_{i-1} + |s_2| - 1 \le 0$ then there is enough space in the ladder of $u_{i-1}$ to assign $s_2$ (or $s_1$) to be part of its lower path. 
In this case it is better to assign $s_2$ to the left ladder as this yields the smaller offset $o_i$ (compare Fig.~\ref{fig:node-types-A}(a) and (c) or Fig.~\ref{fig:node-types-A}(b) and (d)).
If the condition on the length of $s_2$ is satisfied with equality, we need to additionally check that $\beta_{i-1} = 1$ or that the secondary path embedded to the left has a leafless node -- else we have to reject the instance.
Otherwise, if $o_{i-1} + |s_2| - 1 > 0$ but $o_{i-1} + |s_1| - 1 \le 0$ then $s_1$ must be part of the left ladder (see Fig.~\ref{fig:node-types-A}(a--b)) and we again need to verify the existence of a leafless node on the ladder path in case of equality.
In either case, we obtain a smaller offset $o_i$ if the primary path stays on the upper path of $G$ (compare Fig.~\ref{fig:node-types-A}(a--b) or Fig.~\ref{fig:node-types-A}(c--d)).
Thus we set $o_i = |s_1| - 1$ if $s_1$ is embedded to the right, or $o_i = |s_2| - 1$ if $s_2$ is embedded to the right.
We set $\beta_i = 1$ if the path embedded to the right has a leafless node.
If, however, $o_{i-1} + |s_1| -1 > 0$ then we cannot embed either secondary path into the left ladder and thus report that $T'$ is no subgraph of a \stego $G$.

\begin{figure}[tb]
	\centering
		\includegraphics[scale=1,page=1]{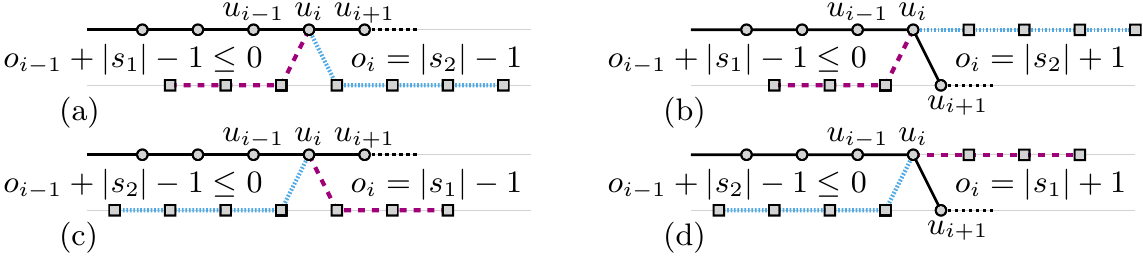}
	\caption{Different possibilities of embedding $T'$ for a node $u_i$ of type A.}
	\label{fig:node-types-A}
\end{figure}

%
%

\textbf{Type C.} If $u_i$ is of type C we distinguish two sub-cases depending on whether the secondary tree $s$ rooted at $u_i$ is a path or not.
We first consider that $s$ is a path, see Fig.~\ref{fig:node-types-C}(a--d).
In the general case that any vertex of $s$ may have leaves in $T$ and thus needs to be embedded as a ladder vertex we have two options (Fig.~\ref{fig:node-types-C}(a--b)).
If $o_{i-1} + |s| - 1\le 0$ then we can embed $s$ into the left ladder (modulo the existence of a leafless node in case of equality) and set $o_i = - 1$ and $\beta_i = 0$ (Fig.~\ref{fig:node-types-C}(a)).
Otherwise, if $o_{i-1} \le 1$ then $s$ can at least be embedded into the lower path of the right ladder and we set $o_i = |s| - 1$ and $\beta_i = 1$ if $s$ has a leafless node (Fig.~\ref{fig:node-types-C}(b)).
Finally, if $o_{i-1} > 1$ then $s$ cannot be embedded into either ladder and we report that $T'$ is no subgraph of a \stego $G$.
In the special case that the neighbor of $u_i$ in $s$ is leafless, it can in fact be embedded as a joint vertex of $G$, which gives us two additional options shortening the required length of $s$ in the ladder by two. The conditions and resulting offsets are given in Fig.~\ref{fig:node-types-C}(c--d), where, if possible, (d) is strictly preferred over (b) and (a) is strictly preferred over (c). Notice that in case of equality in Fig.~\ref{fig:node-types-C}(c) the leafless node adjacent to $u_i$ cannot be used a second time; thus another leafless node must exist in order to consider Fig.~\ref{fig:node-types-C}(c) a valid option. Likewise, in case Fig.~\ref{fig:node-types-C}(d) we can only set $\beta_i = 1$ if a second leafless node exists in $s$.

\begin{figure}[tb]
	\centering
		\includegraphics[scale=1,page=3]{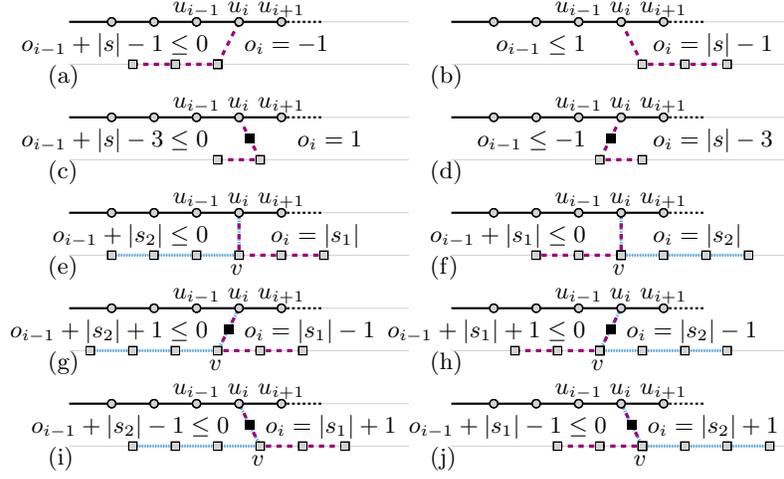}
	\caption{Different possibilities of embedding $T'$ for a node $u_i$ of type C. Nodes marked as black squares must be leafless as they are mapped to a joint vertex of~$G$.}
	\label{fig:node-types-C}
\end{figure}

In the second case, the secondary tree $s$ contains exactly one degree-3 node, $v$, which is either the immediate neighbor of $u_i$, see Fig.~\ref{fig:node-types-C}(e--f), or there is a single leafless degree-2 node between $v$ and $u_i$, see Fig.~\ref{fig:node-types-C}(g--j). 
In both cases, $v$ is the root of two branches of $s$, which are in fact paths that we denote by $s_1$ and $s_2$.
Let us assume that $|s_1| \le |s_2|$ and that $v$ is a neighbor of $u_i$.
If $o_{i-1} + |s_2| \le 0$ then we can assign the longer path $s_2$ to the left part of the ladder (modulo the existence of a leafless node in case of equality) and set $o_i = |s_1|$ and, if there is a leafless node in $s_1$, $\beta_i = 1$ (Fig.~\ref{fig:node-types-C}(e)).
Else if $o_{i-1} + |s_1| \le 0$ we assign $s_1$ to the left (modulo the existence of a leafless node in case of equality) and set $o_i = |s_2|$ and, if $s_2$ has a leafless node, $\beta_i = 1$ (Fig.~\ref{fig:node-types-C}(f)).
If $o_{i-1} + |s_1| > 0$ then $s$ cannot be embedded on the lower path of $G$ at all and we report that $T'$ is no subgraph of a \stego $G$.

If there is a node $w$ between $u_i$ and $v$, we need to map $w$ to a joint vertex of $G$. 
We have two options of arranging $s_1$ and $s_2$ on the left and right part of the ladder and two options of positioning $w$, see Fig.~\ref{fig:node-types-C}(g--j).
Among these four combinations, we again pick the one that satisfies the constraints for the left side and minimizes the offset $o_i$ (if there is a tie and one of the paths contains a leafless node, we pick the option that yields $\beta_i = 1$).

\textbf{Type D.} If $u_i$ is of type D we simply extend $\Pi$ and assign $u_i$ to be part of the upper path just as its predecessor $u_{i-1}$. The new offset $o_i$ is obtained by decreasing the previous offset $o_{i-1}$ by one, i.e., $o_i = o_{i-1} - 1$. The existence of a leafless node on the ladder path is also simply inherited, i.e., we set $\beta_i = \beta_{i-1}$.

\begin{figure}[tb]
	\centering
		\includegraphics[scale=1,page=2]{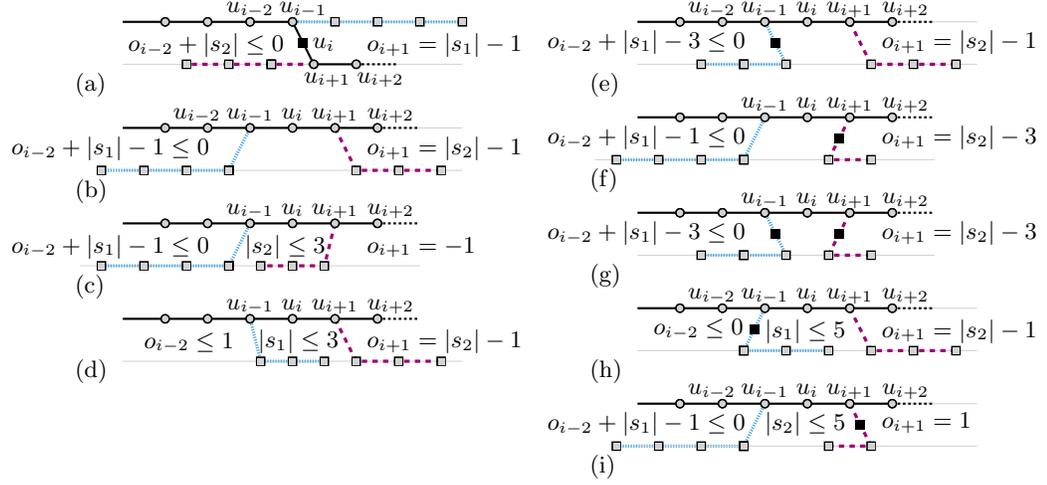}
	\caption{Different possibilities of embedding $T'$ for a node triple $u_{i-1}, u_i, u_{i+1}$ of type~B. Nodes marked as black squares must be leafless as they are mapped to a joint vertex of $G$.}
	\label{fig:node-types-B}
\end{figure}

\textbf{Type B.} Triples of type B are a special case to consider. Let $u_{i-1}, u_i, u_{i+1}$ be the triple vertices, and let $s_1$, $s_2$ be the secondary paths rooted in $u_{i-1}$, $u_{i+1}$, respectively. In this triple the leafless degree-two vertex $u_i$ can be mapped to a joint vertex of $G$, which allows a local backward flip of the secondary paths $s_1$  and $s_2$.
Figure~\ref{fig:node-types-B}(a) shows the only configuration of the triple, in which path $s_2$ is mapped to the left part of the ladder and path $s_1$ to the right part, despite $u_{i+1}$ being two positions right of $u_{i-1}$ in $\Pi$.
This is possible if $o_{i-2} + |s_2| \le 0$ (modulo the existence of a leafless node in case of equality) and yields a new offset of $o_{i+1} = |s_1| - 1$ and flag $\beta_{i+1} = 1$ if $s_1$ contains a leafless node.
Moreover, this is the only case, in which $\Pi$ changes from the upper path of $G$ to the lower path (or vice versa).
If, on the other hand, we do not apply the backward flip, then the  embedding options of the type-B triple are the same as treating it as a sequence of the type-C node $u_{i-1}$, the type-D node $u_i$, and the type-C node $u_{i+1}$. As a consequence, all of them embed the secondary path $s_1$ left of $s_2$. For completeness, Fig.~\ref{fig:node-types-B}(b--i) illustrate all relevant combinations. The details have been already discussed for types C and D.

\bigskip

We finally reintroduce the degree-one vertices that we removed when going from $T$ to $T'$. We claim that this is always possible and that the leaves of $T$ can always be viewed as stumps or joints of a \stego $G$. So far we mapped each vertex of $T'$ as either a ladder or a joint vertex, and in the latter case such a vertex has no leaves in $T$. Moreover, we know for each ladder vertex whether it belongs to the upper path or to the lower path of a ladder. With this mapping of vertices to ladders, our goal is to reinsert the missing edges and vertices that form a \stego $G$ containing $T$ as a (not necessarily spanning) subgraph. In particular, since $T'$ does not contain cycles, all edges that connect a vertex on the upper path of a ladder to the corresponding vertex in the lower path are missing. Moreover, there is also a non-empty sub-path missing in either the upper path or the lower path of each ladder, say the upper path, which leaves some freedom in the reconstruction of $G$. We will exploit this freedom for the reinsertion of the leaves of $T$. Consider a ladder $L$ underlying a snake $N$ of $G$ and consider the sub-path $S$ of $L$ that is not in $T'$. Assume first that $S$ contains at least one vertex, then this vertex does not belong to $T$, as we assume the leaves of $T$ be stumps or joints. We reinsert $S$ and we draw the edges that connect opposite ladder vertices of $L$ so to fully reconstruct $L$. See Fig.~\ref{fig:leaves-1} for an illustration where the edges not in $T$ are dashed and the vertex not in $T$ is larger. In order to reinsert the leaves of $T$ that are adjacent to vertices of $L$, consider first the two mergeable vertices of $N$. Their leaves can be reinserted as stumps. Consider now the leftmost and the rightmost cells of $L$, i.e., those that contain the two mergeable vertices. If they coincide, then there is only one vertex of $L$ whose leaves need to be reinserted, and we can reinsert them as joint vertices that belong to this cell. If they don't coincide, then we assign the leaves as shown in Fig.~\ref{fig:leaves-2}, where the leaves are solid disks (the figure shows the case in which the two mergeable vertices of $N$ are on the same path of $L$, the case in which are on opposite paths is similar). 

\begin{figure}[t]
\centering
\subfigure[~]{\includegraphics[width=0.48\linewidth,page=1]{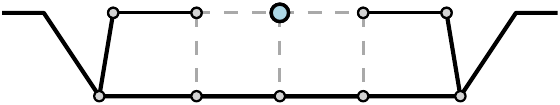}\label{fig:leaves-1}}\hfill
\subfigure[~]{\includegraphics[width=0.48\linewidth,page=2]{tree-leaves.pdf}\label{fig:leaves-2}}
\subfigure[~]{\includegraphics[width=0.48\linewidth,page=3]{tree-leaves.pdf}\label{fig:leaves-3}}\hfill
\subfigure[~]{\includegraphics[width=0.48\linewidth,page=4]{tree-leaves.pdf}\label{fig:leaves-4}}
\hfil
\caption{Illustration for the leaves reinsertion step.}
\label{fig:stego}
\end{figure}

Suppose now that $S$ contains just one edge, call it $(a,b)$. Add $(a,b)$ and draw the edges that connect opposite ladder vertices as to fully reconstruct $L$, see Fig.~\ref{fig:leaves-3}. We call the cell of $L$ containing $(a,b)$ \emph{central}. The previous greedy algorithm guarantees that all the leaves of $T$ adjacent to vertices that belong to the cells to left or to the right of the central one can be reinserted without using the central cell. This is due to the fact that in one of the two corresponding secondary paths there is a leafless node and hence the leaves can be reinserted as in the previous case (see  Fig.~\ref{fig:leaves-4} where the leafless node is large). Thus we can use the freedom given by the central cell to assign the  leaves of the other secondary path, as shown in Fig.~\ref{fig:leaves-4} where the leafless node is on the left path and the right path uses the central cell.


The above algorithm works in linear time: It first constructs $T'$ from $T$ by removing $O(n)$ leaves; it then traverses $\Pi$ and makes a constant number of operations for each node of $\Pi$; it finally reinserts the removed leaves by reconstructing a \stego $G$ that contains $T$ and has size $O(n)$. 
\else
(Sketch, details are in the appendix.)  We know 
that $T$ admits a \propertwo drawing if and only if it is a subgraph of
a \stego 
\cite{DBLP:journals/jgaa/BinucciCDGKKMT17}.
This imposes severe restrictions on the possible degrees in $T$.  In particular, if $T'$ is the subtree obtained by deleting all leaves of $T$, and $\Pi$ is the longest path in $T'$, then all vertices of $T'$ have degree at most 4, and vertices of $T'\setminus \Pi$ of degree 3 or more can only occur in specific places in a \stego.  We now parse the vertices of $\Pi$ in order and reconstruct at each vertex how this could possibly have fit into the structure of a \stego (there may be multiple ways of doing this; we find the one that is ``best'' in the sense that it leaves the most space for future insertions).  There are numerous cases here, making the analysis lengthy.  In all cases, we either conclude that $T$ was not a subgraph of a \stego or we find the best-possible way in which $T'$ can fit into a \stego.  Then we re-insert the leaves of $T$ while maintaining a \stego (or conclude that $T$ was not a subgraph of a \stego, since we found the best-possible one). Finally, we obtain a \propertwo drawing 
using the result by Binucci et al.~\cite{DBLP:journals/jgaa/BinucciCDGKKMT17}.
\fi
\end{proof}

\section{Summary and future directions}

We studied layered drawings of fan-planar graphs.
Motivated by the algorithm by \Dujmovic et al.~\cite{DFK+08}, and using
it as a subroutine, we gave an algorithm that tests the existence
of a fan-plane proper \ld{h} and is fixed-parameter tractable in $h$.
(Variation can handle fan-plane short or 1-plane unconstrained drawings.)
For the situation where the embedding of the graph is not fixed, we studied
the existence of fan-planar proper \lds{2} for trees.  Along the way, we also bounded the
pathwidth of graphs that have a fan-planar (short or proper) \ld{h}, and argued that such graphs have
a bar-1-visibility representation.
Many open problems remain:

\begin{itemize}
\item Are there FPT algorithms to test whether a 
graph has a fan-planar \ld{h} for $h>2$, presuming we can change the
embedding?  This problem was non-trivial even for trees and $h=2$
and proper drawings.

\item Our FPT algorithm for 1-plane unconstrained \ld{h}
permits bends on the long edges.  Is there an algorithm that tests
for the existence of 1-plane {\em straight-line} \ld{h}?   We could
easily test for a 1-plane $y$-monotone \ld{h}, but in contrast
to planar drawings, not all such drawings can be ``stretched'' to make
edges straight-line.

\item Our FPT algorithm for fan-plane drawings only worked for proper
or short drawings.
Is there an FPT algorithm if long edges are allowed?

\item Likewise, our pathwidth-bounds 
work only for proper or short \lds{h}.  Does every graph with an
fan-planar unrestricted \lds{h} have pathwidth $O(h)$?  Does it have a 
bar-1-visibility representation?
Note that fan-planar graphs
are not closed under subdividing edges, so we cannot simply replace long
edges by paths.

\iffull
\item The dynamic programming algorithm by \Dujmovic et al.~\cite{DFK+08}~is 
quite involved, and in particular, does {\em not} appeal to Courcelle's
theorem that states that any problem expressible in monadic second-order
logic is solvable in polynomial time on graphs of bounded pathwidth
\cite{Courcelle90}.  This raises the natural question:  Can we express
whether a graph has a \ld{k} (perhaps under some restrictions such as
proper or fan-planar) 
in monadic second-order logic?  
\fi
\end{itemize}

Last but not least, what do we know about layered drawings of $k$-planar
graphs for $k>1$?  Note that these are not necessarily fan-planar.

\bibliography{fanplanar}

\begin{thebibliography}{10}

\bibitem{Bie14}
T.~Biedl.
\newblock Height-preserving transformations of planar graph drawings.
\newblock In C.~Duncan and A.~Symvonis, editors, {\em Graph Drawing (GD'14)},
  volume 8871 of {\em LNCS}, pages 380--391. Springer, 2014.

\bibitem{DBLP:journals/jgaa/BinucciCDGKKMT17}
C.~Binucci, M.~Chimani, W.~Didimo, M.~Gronemann, K.~Klein,
  J.~Kratochv{\'{\i}}l, F.~Montecchiani, and I.~G. Tollis.
\newblock Algorithms and characterizations for 2-layer fan-planarity: From
  caterpillar to stegosaurus.
\newblock {\em J. Graph Algorithms Appl.}, 21(1):81--102, 2017.

\bibitem{Courcelle90}
B.~Courcelle.
\newblock The monadic second-order logic of graphs. {I}. {R}ecognizable sets of
  finite graphs.
\newblock {\em Inf. Comput.}, 85(1):12--75, 1990.

\bibitem{DBLP:journals/csur/DidimoLM19}
W.~Didimo, G.~Liotta, and F.~Montecchiani.
\newblock A survey on graph drawing beyond planarity.
\newblock {\em {ACM} Comput. Surv.}, 52(1):4:1--4:37, 2019.

\bibitem{DFK+08}
V.~Dujmovic, M.~Fellows, M.~Kitching, G.~Liotta, C.~McCartin, N.~Nishimura,
  P.~Ragde, F.~Rosamond, S.~Whitesides, and D.~Wood.
\newblock On the parameterized complexity of layered graph drawing.
\newblock {\em Algorithmica}, 52:267--292, 2008.

\bibitem{FLW03}
S.~Felsner, G.~Liotta, and S.~Wismath.
\newblock Straight-line drawings on restricted integer grids in two and three
  dimensions.
\newblock {\em J. Graph Alg. Appl}, 7(4):335--362, 2003.

\bibitem{HR92}
L.~Heath and A.~Rosenberg.
\newblock Laying out graphs using queues.
\newblock {\em SIAM J. Comput.}, 21(5):927--958, 1992.

\bibitem{KaufmannU14}
M.~Kaufmann and T.~Ueckerdt.
\newblock The density of fan-planar graphs.
\newblock {\em CoRR}, abs/1403.6184, 2014.

\bibitem{DBLP:journals/csr/KobourovLM17}
S.~G. Kobourov, G.~Liotta, and F.~Montecchiani.
\newblock An annotated bibliography on 1-planarity.
\newblock {\em Computer Science Review}, 25:49--67, 2017.

\bibitem{Schaefer18}
M.~Schaefer.
\newblock {\em Crossing Numbers of Graphs}.
\newblock CRC Press, 2018.

\bibitem{Sud04}
M.~Suderman.
\newblock Pathwidth and layered drawings of trees.
\newblock {\em Intl. J. Comp. Geom. Appl.}, 14(3):203--225, 2004.

\end{thebibliography}
\bibliographystyle{abbrv}

\end{document}